\documentclass[10pt,journal,english,twocolumn]{IEEEtran}
\usepackage{blindtext}

\usepackage{babel}
\usepackage[babel]{microtype}
\usepackage[T1]{fontenc}

\usepackage{graphicx}
\usepackage{xcolor}
\usepackage{tikz}
\usepackage{pgfplots}

\usepackage{tabularx}
\usepackage{booktabs}
\usepackage{multirow}
\usepackage{stfloats} %
\usepackage[caption=false,font=footnotesize]{subfig}

\usepackage{amsmath}
\usepackage{amsfonts}
\usepackage{amssymb}
\usepackage{amsthm}
\usepackage{mathtools}
\usepackage{bm}
\usepackage{siunitx}
\usepackage{dsfont}

\usepackage{csquotes}
\usepackage[backend=biber,url=false,style=ieee,isbn=false,doi=true,eprint=true,dashed=false]{biblatex}
\bibliography{literature.bib}

\usepackage{hyperref}
\hypersetup{
	pdfauthor={Karl-Ludwig Besser},
	pdftitle={},
	colorlinks=true,
	allcolors=.,
	urlcolor=blue,
}

\usepackage[acronyms,nomain,xindy,nowarn]{glossaries}
\makeglossaries
\loadglsentries{acronyms.tex}
\setacronymstyle{long-short}
\glsdisablehyper

\newcommand{\diff}{\ensuremath{\mathrm{d}}}
\newcommand{\diag}[1]{\ensuremath{\operatorname{diag}\left(#1\right)}}

\newcommand{\mat}[1]{\boldsymbol{#1}}
\newcommand{\expect}[2][]{\ensuremath{\mathbb{E}_{#1}\left[#2\right]}}
\newcommand{\inv}[1]{\ensuremath{#1^{-1}}}

\DeclarePairedDelimiter{\abs}{\vert}{\vert}

\DeclarePairedDelimiter{\floor}{\lfloor}{\rfloor}

\newcommand{\transpose}[1]{\ensuremath{#1^{\mathsf{T}}}}
\DeclareMathOperator{\var}{var}
\renewcommand{\vec}{\bm}

\DeclareMathOperator{\Ei}{Ei}
\DeclareMathOperator{\E}{E}
\DeclareMathOperator{\step}{\ensuremath{\mathds{1}}}
\newcommand*{\hankel}{\ensuremath{\mathcal{H}}}

\newcommand{\reals}{\ensuremath{\mathds{R}}}
\newcommand{\complexes}{\ensuremath{\mathds{C}}}
\newcommand{\imag}{\ensuremath{\mathrm{j}}}

\newcommand*{\Q}{\ensuremath{\mathcal{Q}}}
\newcommand*{\unif}{\ensuremath{\mathcal{U}}}
\newcommand*{\normaldist}{\ensuremath{\mathcal{N}}}
\newcommand*{\bernoulli}{\ensuremath{\mathrm{Ber}}}
\newcommand*{\binomialdist}{\ensuremath{\mathcal{B}}}
\newcommand*{\generalbinom}{\ensuremath{\mathcal{GB}}}
\newcommand*{\rayleigh}{\ensuremath{\mathrm{Rayleigh}}}
\newcommand*{\expdist}{\ensuremath{\mathrm{Exp}}}

\newcommand{\tvp}{\ensuremath{\tilde{\varphi}}}

\newcommand{\hLOS}{\ensuremath{h_{\text{LOS}}}}
\newcommand{\pLOS}{\ensuremath{\varphi_{\text{LOS}}}}

\newcommand{\ergCap}{\ensuremath{C_{\text{erg}}}}
\newcommand{\instCap}{\ensuremath{C_{\text{inst}}}}
\newcommand{\epsrate}{\ensuremath{R^{\varepsilon}}}

\newcolumntype{M}{>{\centering\arraybackslash}X}

\theoremstyle{plain}%
\newtheorem{thm}{Theorem}
\newtheorem{cor}{Corollary}
\newtheorem{lem}{Lemma}

\theoremstyle{definition}
\newtheorem{defn}{Definition}%

\theoremstyle{remark}
\newtheorem{rem}{Remark}
\newtheorem*{rem*}{Remark}

\newtheorem*{example*}{Example}

\pgfplotsset{compat=newest}
\pgfplotsset{plot coordinates/math parser=false}
\usetikzlibrary{patterns,arrows,positioning,external,backgrounds,calc,fit,shapes,spy}
\usepgfplotslibrary{fillbetween}

\tikzset{%
	block/.style = {draw, rectangle, minimum height=3em, minimum width=3em},
	riselement/.style = {draw, rectangle, minimum height=.5em, minimum width=.5em},
	ris/.pic = {%
		\foreach \x in {0,...,6}
			\foreach \y in {0,...,4}
				{\node[riselement,anchor=south west] (\x\y) at (.5*\x,0.5*\y) {};}
		\node[above=.1 of 34] {\tikzpictext};
		\node[fit=(00)(04)(64)(60)](ris){};
	}
}
\def\antenna{%
	-- +(0mm,4.0mm) -- +(2.625mm,7.5mm) -- +(-2.625mm,7.5mm) -- +(0mm,4.0mm)
}

\tikzexternalize[prefix=figures/]%

\newcommand*\spyfactor{5.1622776601683793319988935444327}
\newcommand*\spypoint{axis cs:6,2.36036}
\newcommand*\spyviewer{axis cs:10,4.2}

\newcommand*\spypointMax{axis cs:48,4.8861}
\newcommand*\spyviewerMax{axis cs:42,3.5}

\newcommand*\spypointErg{axis cs:10,2.915}
\newcommand*\spyviewerErg{axis cs:13,2}

\definecolor{plot0}{HTML}{2db7d2}
\definecolor{plot1}{HTML}{73408C}%
\definecolor{plot2}{HTML}{EB811B}
\definecolor{plot3}{HTML}{14B03D}

\definecolor{change}{HTML}{0096b8}

\title{Reconfigurable Intelligent Surface Phase Hopping for Ultra-Reliable Communications}
\author{Karl-Ludwig Besser, \IEEEmembership{Graduate Student Member,~IEEE} and\\Eduard A. Jorswieck, \IEEEmembership{Fellow,~IEEE}
\thanks{Parts of this work are presented at the 2021 IEEE 22nd International Workshop on Signal Processing Advances in Wireless Communications (SPAWC)~\cite{Jorswieck2021spawc} {and 25th International ITG Workshop on Smart Antennas (WSA)~\cite{Besser2021wsa}}.}
\thanks{The authors are with the Institute of Communications Technology, Technische Universit\"at Braunschweig, 38106 Braunschweig, Germany (email: \{{k.besser}, {e.jorswieck}\}@tu-bs.de).}
\thanks{The work of K.-L. Besser is supported in part by the German Research Foundation (DFG) under grant JO\,801/23-1.}
}

\begin{document}
\maketitle

\begin{abstract}
	We introduce a phase hopping scheme for reconfigurable intelligent surfaces (RISs) in which the phases of the individual RIS elements are randomly varied with each transmitted symbol.
	This effectively converts slow fading into fast fading.
	We show how this can be leveraged to significantly improve the outage performance {especially for small outage probabilities} \emph{without} channel state information (CSI) at the transmitter and RIS.
	Furthermore, the same result can be accomplished even if only two possible phase values are available.
	Since we do not require perfect CSI at the transmitter or RIS, the proposed scheme has no additional communication overhead for adjusting the phases.
	This enables robust ultra-reliable communications with a reduced effort for channel estimation.
\end{abstract}
\begin{IEEEkeywords}
	Reconfigurable intelligent surfaces, Phase hopping, $\varepsilon$-outage capacity, Outage probability, Ultra-reliable communications.
\end{IEEEkeywords}
\glsresetall

\section{Introduction}\label{sec:introduction}
\Glspl{ris} have been considered widely as a promising enabling technology for the next generation of wireless communications to provide a higher throughput, a lower latency,  a better reliability, and an improved security~\cite{di2020smart,Bjornson2020power,ElMossallamy2020}.
They can be used to shape the propagation of electromagnetic waves~\cite{Kaina2014shaping,DelHougne2019}, which can in turn be used to improve wireless data transmission.
A possible use case is the compensation of Doppler effects, e.g., in high-mobility scenarios~\cite{Basar2021dopplerRIS,Huang2021icc}.
Further applications could include localization and sensing~\cite{Bjornson2021,Wymeersch2020}. %
{Due to the special propagation effects in THz channels, \gls{ris}-assisted communications has also been considered for these frequency bands~\cite{Huang2021}.}

The main focus in the aforementioned cases lies on the correct adjustment of the phase shifts of the individual \gls{ris} elements.
Based on the direct channel as well as the \gls{ris}-assisted channel, \cite{Qingqing2019} minimizes the total transmit power at the transmitter by jointly optimizing the transmit beamforming by an active antenna array at the transmitter and reflect beamforming by passive phase shifters at the \gls{ris}.
A similar problem is considered in \cite{Abeywickrama2020}.
An optimization algorithm for finding the optimal \gls{ris} phases that maximize the energy efficiency in a multi-user downlink communication scenario is presented in \cite{Huang2019}.
The maximization of the weighted \gls{sinr} in a two-user downlink network is considered in \cite{Liu2021}.
The optimal phase shifts for a maximum transmission rate of a single-antenna \gls{ris}-assisted communication system are derived in \cite{Bjornson2020}.
{%
In the context of \gls{urllc}, \gls{ris} have been considered in \cite{Ren2021}, where the authors consider a \gls{ris}-assisted factory-automation communication system and investigate the average data rate and decoding error probability.
}

ScatterMIMO exploits smart surfaces to increase the scattering in the environment in order to provide \gls{mimo} spatial multiplexing gain and additional spatial diversity~\cite{Dunna2020}.
By a clever placement of the \gls{ris}, another virtual access point is created whose signals superimpose at the receiver.

However, the various optimization problems to find the optimal phase shifts typically require \gls{csi} at the \gls{ris} or transmitter.
In contrast to this, we do not choose the \gls{ris} coefficients based on \gls{csi} and, thus, do not require \gls{csi} at the transmitter or at the \gls{ris}.
Instead, we propose to use the \gls{ris} to \emph{transform a slow-fading into a fast-fading channel}, in order to improve the reliability of the link.
This is done by randomly varying the \gls{ris} phases with each transmitted symbol during a constant realization of the slow-fading channels.
{Due to some similarities to the well-known frequency hopping~\cite[Chap.~3]{Torrieri2018}, we call the proposed scheme \emph{\gls{ris} phase hopping} in the following.}
The idea stems from the following observation: depending on the antenna geometries, orientation and location of the transmitter, \gls{ris}, and receiver, we can be lucky and obtain a constructive superposition and achieve high data rate or we can be unlucky to get a destructive superposition and an outage.
For ultra-reliable communications it is better to \emph{sacrifice very high peak data rates} to \emph{gain reliability and compensate poor data rates} by averaging over all possible fading states.

{%
The phase hopping scheme that we use in this work is related to the rotate-and-forward scheme from \cite{Pedarsani2015}.
It has already been applied in a similar way in \cite{Nadeem2021twc,Nadeem2021wcl,Psomas2021}.
In \cite{Nadeem2021twc,Nadeem2021wcl}, the authors consider a broadcast channel where \gls{csi} is assumed only at the base station.
Furthermore, they focus on the average sum-rate and \gls{ee} maximization.
Most closely related to our work is \cite{Psomas2021}, where the authors consider a related communication scenario.
However, they consider \gls{nlos} Rayleigh fading which leads to cumbersome expressions of the derived results.
In contrast, we assume an intermittent channel model, e.g., caused by random blockages in a mmWave communication~\cite{Akdeniz2014mmwave,Lin2018intermittent}.
Additionally, we extend the model by also allowing a \gls{los} component between transmitter and receiver.
Furthermore, while the outage probability is investigated in \cite{Psomas2021}, they additionally focus on other quantities like diversity and energy efficiency.
On the other hand, this work focuses on ultra-reliable communications and we specifically show the reliability gained by employing the \gls{ris} phase hopping scheme.
}

Our proposed transmission scheme is based on two different time-scales, which has been considered in a similar way in previous works~\cite{Huang2021icc,Zhao2021,Zhi2021,Zhi2021ergodic}.
In \cite{Huang2021icc}, a two-stage protocol is proposed to mitigate the Doppler effect in a high-mobility scenario.
The protocol includes a training phase in order to adjust the \gls{ris} phases.
In \cite{Zhao2021}, the optimization of the \gls{ris} phases is split into a long-term optimization problem based on statistical \gls{csi} and a short-term optimization based on the faster varying instantaneous \gls{csi}.
Similar ideas of leveraging long-term statistical \gls{csi} are used in \cite{Zhi2021,Zhi2021ergodic}.
However, these previous works again focus on calculating and setting optimal \gls{ris} phases based on \gls{csi}.
If they need to be set to particular values, there is a communication overhead to first estimate the channels, second compute the optimal phases, and third pass them to the \gls{ris}.
In contrast, this is not necessary in our proposed phase hopping scheme.

In the research of metamaterials several different ideas are proposed on how elements that are capable of phase tuning can be designed~\cite{Turpin2014}.
There also exist prototypes of \gls{ris} elements with both continuous phase tuning~\cite{Zhu2013metasurface,Fara2021} and quantized phases, e.g., with down to only two available phase values~\cite{Kaina2014,Kaina2014shaping}.
The influence of discrete phase shifts on the performance of communication systems has also been investigated in \cite{Wu2020discretePhases,Zhang2020discretePhases,Wang2020oneBitPhase}.
For this reason, we additionally investigate the performance of phase hopping under the assumption that only a finite set of possible \gls{ris} phases is available.

The contributions of this work are summarized as follows. %
\begin{itemize}
	\item We present a {\gls{ris}} phase hopping communication scheme in which the phases of the \gls{ris} elements are randomly varied.
	It is shown that this significantly improves the $\varepsilon$-outage capacity for small $\varepsilon$.
	(Section~\ref{sec:varying-phases})
	\item {Furthermore, we analyze the reliability for the \gls{ris}-assisted communication scenario when the phases of the individual \gls{ris} elements can only be adjusted to values from a finite set.}
	(Section~\ref{sec:quant-phases})
	\item {The presented \gls{ris} phase hopping scheme is compared to other phase adjustment schemes, namely static phases and perfect phase adjustment.
	It is shown that for very small tolerated outage probabilities, i.e., in the context of ultra-reliable communications, phase hopping gets close to the performance of perfect phase adjustment \emph{without} requiring \gls{csi} at the \gls{ris}.
	(Section~\ref{sec:comparison})
	}
\end{itemize}
All of the calculations and simulations are made publicly available in interactive notebooks at~\cite{BesserGitlab}.

\textit{Notation:}
Vectors are written in boldface letters, e.g., $\vec{x}$.
For a random variable $X$, we use $F_{X}$, $f_{X}$, and $\phi_{X}$ for its probability distribution function, density function and characteristic function, respectively.
The expectation is denoted by $\mathbb{E}$ and the probability of an event by $\Pr$.
The uniform distribution on the interval $[a,b]$ is denoted as $\unif[a,b]$.
The normal distribution with mean $\mu$ and variance $\sigma^2$ is written as $\normaldist(\mu, \sigma^2)$.
The binomial distribution with $N$ independent trials and success probability $p$ is denoted as $\binomialdist(N, p)$.
The unit step function is written as $\step(x)$.
The real and complex numbers are denoted by $\reals$ and $\complexes$, respectively.
Logarithms, if not stated otherwise, are assumed to be with respect to the natural base.

\section{System Model and Problem Formulation}
Throughout this work, consider a slow-fading \gls{siso} communication system which is assisted by an \gls{ris} with $N$ elements between the transmitter and receiver.
The received signal $y\in\complexes$ is then given as
\begin{equation}
	y = H x + n\,,
\end{equation}
where $x\in\complexes$ is the transmitted signal, $H\in\complexes$ the overall fading coefficient, and $n\in\complexes$ circularly-symmetric complex \gls{awgn}.
{The transmit power is limited by the average power constraint $P$.}
Since we assume an \gls{ris}-assisted communication, the channel fading $H$ is given by~\cite{Huang2019}
\begin{equation}\label{eq:random-phases-overall-gain}
	H = \hLOS + \transpose{\vec{g}}\mat{\Theta}{\vec{h}}\,,
\end{equation}
where $\hLOS\in\complexes$ is the channel coefficient of the \gls{los} connection and $\vec{h}\in\complexes^{N}$ and $\vec{g}\in\complexes^{N}$ represent the channels from transmitter to \gls{ris} and from \gls{ris} to the receiver, respectively. 
The matrix $\mat{\Theta}\in\complexes^{N\times N}$ is a diagonal matrix with the \gls{ris} phases on the main diagonal, i.e., $\mat{\Theta}=\diag{\exp(\imag \theta_1), \dots{}, \exp(\imag \theta_N)}$. 
An illustration of the setup is given in Fig.~\ref{fig:ris-system-model}.
\begin{figure}
	\centering
	\begin{tikzpicture}
	\node[block] (tx) {Transmitter};
	\draw (tx.north) \antenna;

	\pic[above right=1 and 1 of tx, pic text={RIS ($\vec{\Theta}$)}] {ris};
	
	\node[block, below right=1 and 1 of 60.south east, anchor=north west] (rx) {Receiver};
	\draw (rx.north) \antenna;
	
	\draw[->] (tx.east) -- node[above] {$\hLOS$} (rx.west);
	\draw[->] (tx) -- node[above] {$\vec{h}$} (ris);
	\draw[->] (ris) -- node[above] {$\vec{g}$} (rx);
\end{tikzpicture}
	\caption{System model of the \gls{ris}-assisted communication system.}
	\label{fig:ris-system-model}
\end{figure}
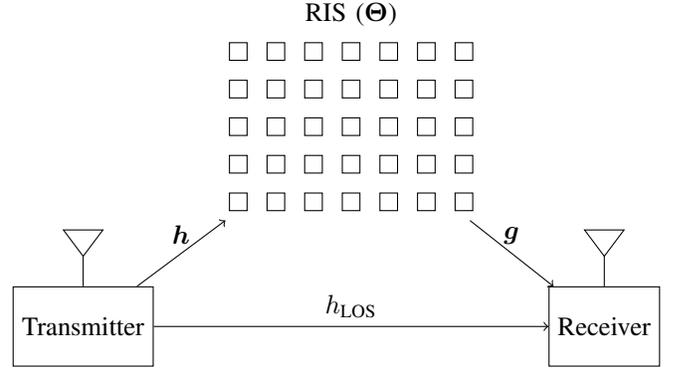

This type of \gls{siso} model can {naturally} originate from an \gls{ris}-aided mmWave \gls{mimo} link (similar to e.g. \cite{Jiguang20}) where the transmit beamforming and receive beamforming are both included in the effective channel vectors $\vec{h}$ and $\vec{g}$, respectively.
{%
Only if the pairing of transmit and receive beamforming vectors is successful, a link is established.
Otherwise, there is an outage, e.g., due to blockage.
This leads to the intermittent fading model considered in the following.
Additionally, only considering a \gls{siso} system provides a lower bound on the performance, which is useful for worst-case design in the context of ultra-reliable communications.}

The \gls{ris} is located in the far-field of the transmitter and receiver.
{Therefore, we assume that both $h_i$ and $g_i$ are independent quasi-static fading coefficients with uniformly distributed phases, i.e., $h_i={\abs{h_i}}\exp(\imag \varphi_i)$ and $g_i={\abs{g_i}}\exp(\imag \psi_i)$ with $\varphi_i, \psi_i\sim\mathcal{U}[0, 2\pi]$.}
{%
For the absolute values $\abs{h_i}$ and $\abs{g_i}$, we assume an intermittent channel model, i.e., $\abs{h_i}$ and $\abs{g_i}$ are independent Bernoulli-distributed random variables~\cite{Lin2018intermittent}.
This model arises from the considered mmWave transmission scenario, since these frequencies are highly susceptible to blockage~\cite{Akdeniz2014mmwave,Alejos2008,Bai2014,Bai2015}.
From this, it is straightforward to see that the \gls{nlos} connection is only available if both $\abs{h_i}$ and $\abs{g_i}$ are equal to \num{1}.
This connection probability is denoted as $p_i = \Pr(\abs{h_i}=1, \abs{g_i}=1)$.
Throughout the following, this will be abbreviated by the random variable $c_i=\abs{h_i}\abs{g_i}$, which is Bernoulli-distributed with $p_i = \Pr(c_i=1)$.
The number of ones in a realization of $\vec{c}=(c_1, \dots{}, c_N)$ is denoted as $\tilde{N}$ and distributed according to a general binomial distribution~\cite{Wang1993}, i.e., $\tilde{N}\sim\generalbinom((p_1, \dots{}, p_N))$.
Note that this corresponds to the regular binomial distribution, if all $p_i$ are the same, i.e., $p_i=p$ for all $i=1, \dots{}, N$.
}
{This model is a slightly modified version of the fluctuating fading model~\cite{Jerez2017}.}
In order to model the {path loss} difference of the \gls{los} component, it has an absolute value of $a$, i.e., we have that
\begin{equation}\label{eq:def-model-los-component}
	\hLOS = a\exp(\imag \pLOS)\,.
\end{equation}
{In a \gls{nlos} scenario, i.e., if there is no \gls{los} connection, we have that $a=0$.}
{Otherwise}, we will typically have $a>1$ in order to reflect a stronger \gls{los} connection.
It is straightforward to normalize the absolute values with respect to $a$, such that the \gls{nlos} components have an absolute value smaller than $1$.
However, for simplifying notation, we will use the above normalization throughout this work.
Additionally, we will assume that $a$ and $\pLOS$ have the same slow fading time-scale as {$\varphi_i$} and $\psi_i$.

Based on these assumptions, we can simplify the expression of $H$ in \eqref{eq:random-phases-overall-gain} to
\begin{align}
	H &= \hLOS + \sum_{i=1}^{N} h_i \left[\mat{\Theta}\right]_{ii} g_i \notag\\
	&= a\exp\left(\imag \pLOS\right) + \sum_{i=1}^{N} {c_i}\exp\left(\imag\left(\varphi_i + \psi_i + \theta_i\right)\right)\,.\label{eq:overall-channel-coeff}
\end{align}
In the case of perfect \gls{csi} at all communication parties, the \gls{ris} phases $\theta_i$ can be optimized based on the channel realizations $\varphi_i$ and $\psi_i$~\cite{Bjornson2020}.
However, while we assume that the receiver has perfect \gls{csi} about the channel realization of $H$, we assume that neither transmitter nor \gls{ris} have \gls{csi}.
This implies that we do not need to estimate the component channels $\hLOS$, $\vec{h}$, and $\vec{g}$, but only the effective channel and only at the receiver side.
At the transmitter, we, therefore, do not perform any power or rate adaption and assume a {constant transmit power}~{$P$} throughout this work.

The suitable performance metrics for this slow fading channel are the outage probability $\varepsilon$ and the $\varepsilon$-outage capacity $\epsrate$~\cite{Tse2005}.
An outage occurs, if the instantaneous channel capacity %
\begin{equation*}
	\instCap = \log_2\left(1 + \abs{H}^2\right)
\end{equation*}
for a (constant) realization of the channels $\hLOS$, $\vec{h}$, and $\vec{g}$ is less than the transmission rate $R$.
The outage probability is, therefore, defined as
\begin{equation}\label{eq:def-out-prob}
	\varepsilon = \Pr\left(\instCap < R\right)\,.
\end{equation}
The $\varepsilon$-outage rate is then defined as the maximum transmission rate for which the probability of an outage is at most $\varepsilon$~\cite{Tse2005},
\begin{equation}\label{eq:def-eps-rate}
	\epsrate = \sup_{R\geq 0}\left\{R \;|\; \Pr\Big(\instCap < R\Big) \leq \varepsilon\right\}\,.
\end{equation}

\subsection{Problem Formulation}
For the communication scenario described above, the following question arises.
\emph{What is a suitable technique to adjust the \gls{ris} phases (without perfect \gls{csi}) in order to achieve a high $\varepsilon$-outage capacity, especially for small $\varepsilon$, e.g., less than $10^{-3}$?}

In this work, we will answer this question by proposing a {phase hopping technique} in which the phases of the individual \gls{ris} elements are randomly changed for each transmitted symbol.
\section{Randomly Varying Phases}\label{sec:varying-phases}
{In order to solve the formulated problem from the previous section, we will apply a \emph{\gls{ris} phase hopping} scheme, which achieves an ultra-low outage probability.}
We summarize the scheme in the following definition.
\begin{defn}[\Gls{ris} Phase Hopping Scheme]
	In the \emph{\gls{ris} phase hopping} scheme, the phases $\theta_i$, $i=1,\dots{},N$, of the $N$ \gls{ris} elements are randomly varied with each transmitted symbol.
	The phase sequence is determined by a pseudo-random sequence which is known at all {legitimate} parties in the communication system.
\end{defn}

\begin{figure}%
	\centering
	{\begin{tikzpicture}%
	\begin{axis}[
		width=.95\linewidth,
		height=.25\textheight,
		xlabel={Transmitted Symbols (Time)},
		ylabel={Phase Values},
		xmin=0,
		xmax=25,
		ymin=0,
		ymax=6.3,
		xtick=\empty,
		xtick={0,8,16,24},
		xticklabels={},
		ymajorgrids,
		yminorgrids,
		xmajorgrids,
		grid style={line width=.1pt, draw=gray!20},
		major grid style={line width=.25pt,draw=gray!40},
		clip=false,
		x label style={at={(axis description cs:0.5,-0.1)},anchor=north},
		]
		\addplot[domain=0:25, samples=50, plot0, thick] {2*pi*rnd};
		\addlegendentry{$\theta_1$};
		\addplot[domain=0:25, samples=50, dashed, plot2, thick] {2*pi*rnd};
		\addlegendentry{$\theta_2$};
		\addplot[very thick, const plot, mark=, black] coordinates {(0, 1.88) (8, 4.97) (16, 2.89) (24,0.3) (25,0.3)};
		\addlegendentry{$\tvp_1$};
		
		\draw[|<->] (axis cs:0,-.3) -- node[fill=white]{\footnotesize Codeword 1} (axis cs:8,-0.3);
		\draw[|<->] (axis cs:8,-.3) -- node[fill=white]{\footnotesize Codeword 2} (axis cs:16,-0.3);
		\draw[|<->|] (axis cs:16,-.3) -- node[fill=white]{\footnotesize Codeword 3} (axis cs:24,-0.3);
	\end{axis}
\end{tikzpicture}}
	\vspace*{-.25em}
	\caption{{Exemplary illustration of the phase hopping scheme. The overall phases $\tvp_i=\varphi_i+\psi_i$ of the slow fading channels $\vec{g}$ and $\vec{h}$ are constant over a long period of time. The phases $\theta_i$ of the individual \gls{ris} elements are randomly varied with each transmitted symbol within one codeword.}}
	\label{fig:phase-hopping-idea}
\end{figure}
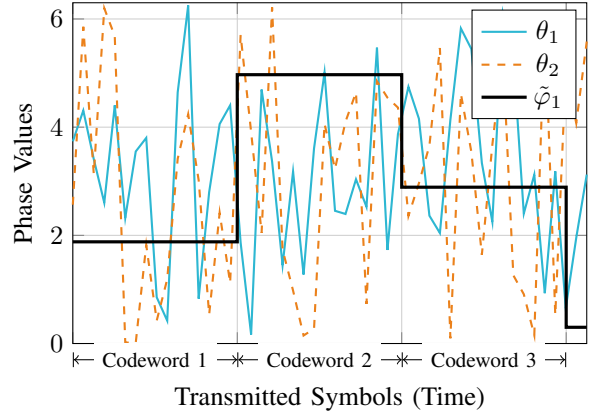

{%
An illustration of the \gls{ris} phase hopping scheme is given in Fig.~\ref{fig:phase-hopping-idea}.
First, it can be seen that for the slow-fading channel, the channel realization $\vec{h}$ and $\vec{g}$ are constant for the transmission of one full codeword.
In contrast,} the phases $\theta_i$ of the \gls{ris} elements are varied randomly at each transmitted symbol.
Since this changing is extremely fast compared to the time-scale of the slow-fading channels, it creates an artificial fast-fading.
However, the phases are adjusted by an underlying pseudo-random sequence that is known at all communication parties, and thus, we obtain perfect \gls{csi} of the fast-fading channel at the receiver.
With this perfect \gls{csir} of the fading realizations (including the artificial fast fading), we can apply the well-known results for the ergodic capacity, which can be achieved by averaging over the received symbols during one constant (slow-fading) realization of the channels $\hLOS$, $\vec{h}$ and $\vec{g}$.
The ergodic capacity is then given by~\cite{Tse2005}
\begin{equation}\label{eq:def-erg-capac}
	\ergCap = \expect[\vec{\theta}]{\log_2\left(1 + \abs*{\hLOS + \sum_{i=1}^{N}{c_i} \exp\left(\imag \left(\theta_i + \tvp_i\right)\right)}^2\right)}\,,
\end{equation}
where we set $\tvp_i = \varphi_i+\psi_i\mod 2\pi$ as the overall phase of the constant channels $\vec{h}$ and $\vec{g}$.

It directly follows, that an outage according to \eqref{eq:def-out-prob} only occurs, if the transmission rate $R$ is larger than the ergodic capacity,
\begin{equation}\label{eq:def-outage-prob-varying}
	\varepsilon = \Pr\left(\ergCap < R\right)\,.
\end{equation}

Throughout this section, we will assume that the \gls{ris} phases~$\theta_i$ are independently and uniformly distributed over $[0, 2\pi]$.
If we had perfect \gls{csi} at the \gls{ris}, the phases $\theta_i$ could be adjusted optimally such that they compensate the phase shifts $\tvp_i$ of the channels, i.e., $\theta_i=-\tvp_i$~\cite{Bjornson2020}.

\begin{rem}[Connection to Frequency-Hopping]\label{rem:frequency-hopping-connection}
	The presented idea of \emph{\gls{ris} phase hopping} {is also referred to as random rotation coding~\cite{Psomas2021}}.
	{However, it} is also closely related to the well-established frequency-hopping method~\cite[Chap.~3]{Torrieri2018}.
	In frequency-hopping, the carrier frequency is frequently changed in order to avoid interference that might occur on specific frequencies.
	Therefore, it helps to also increase the reliability of the transmission.
	Our proposed phase hopping scheme has the following parallels to frequency-hopping.
	In our case, the hop set consists of the possible phase values of the \gls{ris} elements.
	The hop rate is set to match the symbol rate, i.e., we change the phases with each transmitted symbol.
	Just like in frequency-hopping systems, we assume that all users know the hop pattern based on a pseudo-random phase sequence~\cite{Ezzine2020}.
	{Based on these similarities, we propose using the term \emph{\gls{ris} phase hopping}.}
\end{rem}

\subsection{Non-Line-of-Sight Scenario}\label{sub:varying-nlos}
We will start with the simpler \gls{nlos} scenario, i.e., $\hLOS=0$.

{%
\subsubsection{Fixed Number of Available Links}
In order to evaluate the outage probability in \eqref{eq:def-outage-prob-varying}, we need the expression for the ergodic capacity in \eqref{eq:def-erg-capac}.
Since this depends on the number of available links, we first derive the ergodic capacity for a fixed number of available links $\tilde{N}$.
}

\begin{lem}[{{Ergodic Capacity \Gls{nlos} for $\tilde{N}$ Available Links}}]\label{lem:erg-cap-varying-nlos-exact-n-tilde}
	Consider the previously described \gls{ris}-assisted slow fading communication scenario without \gls{csi} at the transmitter and \gls{ris}.
	There is no \gls{los} connection, i.e., $\hLOS=0$.
	The \gls{ris} applies phase hopping with \gls{iid} and uniformly distributed $\theta_i$.
	{Let $\tilde{N}$ out of all $N$ links be available, i.e., $\abs{h_i}=\abs{g_i}=1$ for $i=1, \dots{}, \tilde{N}$ with $\tilde{N}\leq N$.}
	{The ergodic capacity is then given by}
	\begin{equation}\label{eq:erg-cap-varying-nlos-exact}
		C_{\text{erg,NLOS}}{(\tilde{N})} = \int_{0}^{{\tilde{N}}} \log_2\left(1 + s^2\right) \int_{0}^{\infty}s t J_0(st) J_0(t)^{{\tilde{N}}}\diff{t}\,\diff{s}\,.
	\end{equation}
	{%
	The functions $J_0$ and $J_1$ are the Bessel functions of the first kind of orders zero and one, respectively.
	For $\tilde{N}=0$, we have $C_{\text{erg,NLOS}}(0)=0$.
	}
\end{lem}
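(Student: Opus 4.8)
The plan is to recognize the inner phasor sum as a planar random walk, obtain the distribution of its magnitude through a characteristic-function (Hankel-transform) argument, and then integrate $\log_2(1+s^2)$ against the resulting density.

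First I would specialize the ergodic capacity \eqref{eq:def-erg-capac} to $\hLOS=0$. Because exactly $\tilde{N}$ links are available, only the terms with $c_i=1$ remain, giving
\[
	\ergCap = \expect[\vec{\theta}]{\log_2\Bigl(1 + \Bigl|\sum_{i=1}^{\tilde{N}} \exp\bigl(\imag(\theta_i + \tvp_i)\bigr)\Bigr|^2\Bigr)}.
\]
The crucial observation is that each $\theta_i\sim\unif[0,2\pi]$ is independent of the fixed channel phase $\tvp_i$, so $\phi_i := (\theta_i + \tvp_i)\bmod 2\pi$ is again uniform on $[0,2\pi]$. Hence $S := \sum_{i=1}^{\tilde{N}} \exp(\imag\phi_i)$ is a sum of $\tilde{N}$ independent unit phasors with uniform phases, and its law---together with $\ergCap$---depends only on $\tilde{N}$, not on the particular $\tvp_i$. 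This invariance is precisely what makes the result a function of $\tilde{N}$ alone.

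Next I would derive the density of the magnitude $R=\abs{S}$. Writing $S = X + \imag Y$ with $X=\sum_i\cos\phi_i$ and $Y=\sum_i\sin\phi_i$, independence factorizes the joint characteristic function, and a single term evaluates to $\expect{\exp(\imag(u\cos\phi + v\sin\phi))} = J_0(\sqrt{u^2+v^2})$. Thus the joint characteristic function equals $J_0(\rho)^{\tilde{N}}$ with $\rho=\sqrt{u^2+v^2}$, depending only on $\rho$. Exploiting this circular symmetry, the two-dimensional Fourier inversion collapses to a zeroth-order Hankel transform, yielding the radial density
\[
	f_R(s) = s\int_0^{\infty} t\, J_0(st)\, J_0(t)^{\tilde{N}}\,\diff t.
\]
Since $R=\abs{S}\le\tilde{N}$ by the triangle inequality, $R$ is supported on $[0,\tilde{N}]$.

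Finally, the ergodic capacity is simply $\expect{\log_2(1+R^2)} = \int_0^{\tilde{N}} \log_2(1+s^2)\, f_R(s)\,\diff s$, which is exactly \eqref{eq:erg-cap-varying-nlos-exact}; for $\tilde{N}=0$ we have $R\equiv 0$ and therefore $\ergCap=0$. I expect the main obstacle to be the rigorous inversion step---justifying the interchange of the oscillatory, only conditionally convergent integrals and confirming the circular-symmetry/Hankel identity---because $J_0(t)^{\tilde{N}}$ decays merely like $t^{-\tilde{N}/2}$, so convergence of $\int_0^\infty t\,J_0(st)J_0(t)^{\tilde{N}}\diff t$ must be argued with care (e.g., via the asymptotics of $J_0$ or by a suitable regularization).
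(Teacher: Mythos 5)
Your proof is correct and follows essentially the same route as the paper: reduce the sum to the $\tilde{N}$ available unit phasors, note that $\theta_i+\tvp_i$ is again uniform, use the Pearson-random-walk density $f_{S_{\tilde{N}}}(s)=\int_0^\infty s\,t\,J_0(st)J_0(t)^{\tilde{N}}\,\diff t$ on $[0,\tilde{N}]$, and integrate $\log_2(1+s^2)$ against it. The only difference is that the paper simply cites \cite[Sec.~3.2.1]{Jammalamadaka2001} for this density, whereas you derive it via the characteristic function and Hankel inversion --- which is precisely the derivation behind that reference, and your caveat about conditional convergence for small $\tilde{N}$ is a fair point the paper glosses over.
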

\begin{proof}
	{First, note that for $\tilde{N}$ available links, we have that
	\begin{equation*}
		S_{\tilde{N}} = \abs*{\sum_{i=1}^{N}\abs{h_i}\abs{g_i}\exp\left(\imag\left(\tvp_i+\theta_i\right)\right)} = \abs*{\sum_{i=1}^{\tilde{N}}\exp\left(\imag\left(\tvp_i+\theta_i\right)\right)},
	\end{equation*}
	since for all non-available links, either $\abs{h_i}$ or $\abs{g_i}$ are zero.
	}
	For \gls{iid} $\theta_i$ with a uniform distribution on $[0, 2\pi]$, the exact \gls{cdf} of $\abs{\sum_{i=1}^{{\tilde{N}}}\exp\left(\imag\left(\tvp_i+\theta_i\right)\right)}$ {can be derived using \cite[Sec.~3.2.1]{Jammalamadaka2001} as %
	\begin{equation}\label{eq:cdf-s-n}
		F_{S_{\tilde{N}}}\left(s\right) = s \int\limits_{0}^{\infty} J_1\left(s\cdot t\right) J_0\left(t\right)^{{\tilde{N}}} \diff{t}\,.
	\end{equation}
	}	
	The corresponding \gls{pdf} $f_{S_N}$ is given by~\cite[Eq.~(3.2.3)]{Jammalamadaka2001}
	\begin{equation}\label{eq:pdf-s-n-exact}
		f_{S_{{\tilde{N}}}}(s) = \int_{0}^{\infty}s t J_0(st) J_0(t)^{{\tilde{N}}}\diff{t},\quad 0\leq s \leq {\tilde{N}}\,.
	\end{equation}
	Combining this with the expectation from \eqref{eq:def-erg-capac} yields the expression in \eqref{eq:erg-cap-varying-nlos-exact}.
\end{proof}

Even though the expression for the ergodic capacity in \eqref{eq:erg-cap-varying-nlos-exact} looks cumbersome, it can be efficiently calculated numerically.
For this, we need the following observation.
The Hankel transform of order $\nu$ of function $f$ is defined as~\cite[Chap.~9]{Poularikas2010}
\begin{equation}\label{eq:def-hankel-transform}
	\hankel_\nu\{f(t)\}(s) = \int_{0}^{\infty} f(t) J_{\nu}(st)t\diff{t}\,.
\end{equation}
With this, we can rewrite \eqref{eq:erg-cap-varying-nlos-exact} in terms of the Hankel transform of $J_0(t)^{\tilde{N}}$ as
\begin{equation}\label{eq:erg-capac-varying-nlos-exact-hankel}
	C_{\text{erg,NLOS}}{(\tilde{N})} = \int_{0}^{{\tilde{N}}} \log_2\left(1 + s^2\right) s \hankel_0\left\{J_0(t)^{{\tilde{N}}}\right\}\!(s) \diff{s}\,.
\end{equation}
This can then be efficiently calculated numerically~\cite{Ogata2005}.
For the results presented in the following, we use the implementation in the \texttt{hankel} library~\cite{Murray2019hankel} in Python.
The source code to reproduce all calculations and simulations can be found in~\cite{BesserGitlab}.

However, this method still requires a specialized implementation of the Hankel transform, which might not be widely available.
We, therefore, present the following approximation for large ${\tilde{N}}$, which is easier to evaluate and useful as a guideline for system design.

\begin{lem}[{Approximate {Ergodic Capacity \Gls{nlos} for $\tilde{N}$ Available Links}}]\label{lem:erg-cap-varying-nlos-n-tilde}
	Consider the previously described \gls{ris}-assisted slow fading communication scenario without \gls{csi} at the transmitter and \gls{ris}.
	There is no \gls{los} connection, i.e., $\hLOS=0$.
	The \gls{ris} applies phase hopping with \gls{iid} and uniformly distributed $\theta_i$.
	{Let $\tilde{N}$ out of all $N$ links be available, i.e., $\abs{h_i}=\abs{g_i}=1$ for $i=1, \dots{}, \tilde{N}$ with $\tilde{N}\leq N$.}
	{%
	The ergodic capacity is then approximated by
	\begin{equation}\label{eq:erg-cap-varying-nlos-approx}
		C_{\text{erg,NLOS}}(\tilde{N}) \approx -\frac{\exp\left(\frac{1}{{\tilde{N}}}\right) \Ei\left(-\frac{1}{{\tilde{N}}}\right)}{\log 2}\,.
	\end{equation}
	For $\tilde{N}=0$, we have $C_{\text{erg,NLOS}}(0)=0$.
	}
\end{lem}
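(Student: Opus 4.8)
The plan is to replace the exact distribution of $S_{\tilde{N}}$ obtained in Lemma~\ref{lem:erg-cap-varying-nlos-exact-n-tilde} by its Gaussian approximation and then evaluate the resulting capacity integral in closed form. First I would note that, since $\theta_i\sim\unif[0,2\pi]$ is independent of $\tvp_i$, each summand $\exp(\imag(\tvp_i+\theta_i))$ is a unit phasor whose phase is again uniform on $[0,2\pi]$, so the $\tilde{N}$ phasors are \gls{iid}. Writing $S_{\tilde{N}}^2 = X^2+Y^2$ with $X=\sum_{i=1}^{\tilde{N}}\cos(\tvp_i+\theta_i)$ and $Y=\sum_{i=1}^{\tilde{N}}\sin(\tvp_i+\theta_i)$, both $X$ and $Y$ are sums of zero-mean \gls{iid} terms of variance $1/2$ and are uncorrelated. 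By the central limit theorem, for large $\tilde{N}$ they are approximately independent and each $\normaldist(0,\tilde{N}/2)$ distributed.

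Consequently $S_{\tilde{N}}$ is approximately Rayleigh distributed and $S_{\tilde{N}}^2$ is approximately exponentially distributed with mean $\expect[\vec{\theta}]{S_{\tilde{N}}^2}=\tilde{N}$, i.e.\ it has density $f(x)=\tfrac{1}{\tilde{N}}\exp(-x/\tilde{N})$ for $x\ge 0$. Substituting this density into the ergodic-capacity expectation from \eqref{eq:def-erg-capac} reduces the computation to the single integral
\begin{equation*}
	C_{\text{erg,NLOS}}(\tilde{N}) \approx \frac{1}{\log 2}\int_{0}^{\infty} \log\left(1+x\right)\frac{1}{\tilde{N}}\exp\left(-\frac{x}{\tilde{N}}\right)\diff{x}\,.
\end{equation*}

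This is precisely the standard ergodic capacity of a Rayleigh fading channel, which I would evaluate by integration by parts: the boundary terms vanish and the logarithm is removed, leaving $\int_{0}^{\infty}\frac{\exp(-x/\tilde{N})}{1+x}\diff{x}$. The substitution $y=1+x$ turns this into $\exp(1/\tilde{N})\int_{1/\tilde{N}}^{\infty}\frac{\exp(-u)}{u}\diff{u}=\exp(1/\tilde{N})\,\E_1(1/\tilde{N})$. Using the identity $\E_1(z)=-\Ei(-z)$ and dividing by $\log 2$ then yields exactly \eqref{eq:erg-cap-varying-nlos-approx}.

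The main subtlety is that the central limit theorem only provides convergence in distribution of $S_{\tilde{N}}^2$, whereas the capacity integrates the \emph{unbounded} function $\log(1+x)$ against this law; making the replacement rigorous would require a uniform-integrability argument, and in any case the approximation is only expected to be accurate for large $\tilde{N}$, consistent with how the statement is phrased. A secondary point worth checking is that the exact second moment $\expect[\vec{\theta}]{S_{\tilde{N}}^2}=\tilde{N}$ holds for \emph{every} $\tilde{N}$ (the cross terms vanish in expectation because the phases are independent and uniform); this is what fixes the rate of the approximating exponential, so that the approximation reproduces the true mean received power even though the Gaussian shape itself is only asymptotic.
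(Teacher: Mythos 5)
Your proposal is correct and follows essentially the same route as the paper: the paper likewise invokes the normal (large-$\tilde{N}$) approximation from \cite[Sec.~3.4.1]{Jammalamadaka2001} to conclude that $S_{\tilde{N}}^2$ is exponentially distributed with mean $\tilde{N}$ and then evaluates the resulting Rayleigh-fading ergodic-capacity integral to obtain \eqref{eq:erg-cap-varying-nlos-approx}. You merely spell out the central-limit-theorem step and the integration by parts that the paper leaves to the cited references, and your remarks on uniform integrability and the exactness of $\expect{S_{\tilde{N}}^2}=\tilde{N}$ are accurate but not part of the paper's argument.
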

\begin{proof}
For independent $\theta_i$ and with the {normal distribution} approximation for large ${\tilde{N}}$ from \cite[Sec.~3.4.1]{Jammalamadaka2001}, we have that $\abs{\sum_{i=1}^{{\tilde{N}}} \exp\left(\imag\left(\tilde{\varphi_i} + \theta_i\right)\right)}^2$ is exponentially distributed with mean ${\tilde{N}}$.
For this case, the ergodic capacity from \eqref{eq:def-erg-capac} is calculated as
\begin{align}
	C_{\text{erg,NLOS}}{(\tilde{N})} &= \expect[S^2\sim\expdist(1/{\tilde{N}})]{\log_2\left(1+S^2\right)}\\
	&= -\frac{\exp\left(\frac{1}{{\tilde{N}}}\right) \Ei\left(-\frac{1}{{\tilde{N}}}\right)}{\log 2}\,,%
\end{align}
with $\Ei$ being the exponential integral~\cite[Sec.~5.1]{Abramowitz1972}.
\end{proof}

First, we want to verify how accurate the approximation in Lemma~\ref{lem:erg-cap-varying-nlos-n-tilde} is.
In Fig.~\ref{fig:comparison-erg-cap-exact-approx-varying-nlos}, we show the exact ergodic capacity from \eqref{eq:erg-cap-varying-nlos-exact} together with the approximation for large ${\tilde{N}}$ from \eqref{eq:erg-cap-varying-nlos-approx}.
As expected, it can be seen that the approximation becomes more accurate with increasing ${\tilde{N}}$.
At ${\tilde{N}}=6$, the difference between exact and approximate value is around \num{0.035} which corresponds to a relative error of around \SI{1.5}{\percent}.
In contrast, at ${\tilde{N}}=50$, the difference is only around \num{0.0064}, which is a relative error of around \SI{0.13}{\percent}.
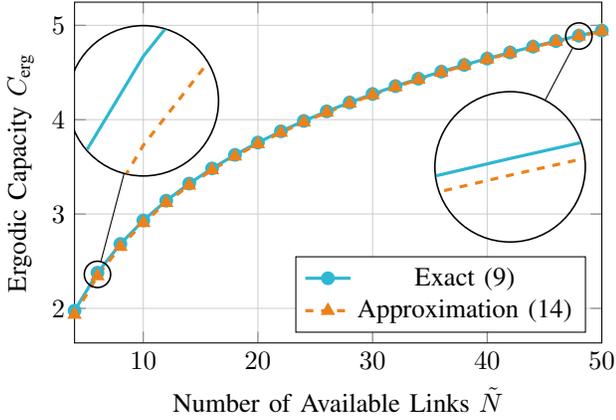
\begin{figure}
	\centering
	\begin{tikzpicture}[spy using outlines={circle, magnification=6, connect spies}]
	\begin{axis}[
		width=.97\linewidth,
		height=.25\textheight,
		xlabel={{Number of Available Links $\tilde{N}$}},
		ylabel={Ergodic Capacity $\ergCap$},
		xmin=4,
		xmax=50,
		legend pos=south east,
		legend entries={{Exact~\eqref{eq:erg-cap-varying-nlos-exact}}, {Approximation~\eqref{eq:erg-cap-varying-nlos-approx}}},
		mark options={solid},
		ymajorgrids,
		xmajorgrids,
		xminorgrids,
		grid style={line width=.1pt, draw=gray!20},
		major grid style={line width=.25pt,draw=gray!40},
		]
		
		\addplot[plot0,mark=*,very thick] table[x=N,y={exact}] {data/approx-exact-random-nlos.dat};
		\addplot[plot2,mark=triangle*,very thick, dashed] table[x=N,y={appr}] {data/approx-exact-random-nlos.dat};

		\node[semithick, circle, draw, minimum size=.35cm, inner sep=0pt] (spypoint) at (\spypoint) {};
		\node[semithick, circle, draw, minimum size=2cm, inner sep=0pt] (spyviewer) at (\spyviewer) {};
		\draw (spypoint) edge (spyviewer);
		\begin{scope}
			\clip (spyviewer) circle (1cm-.5\pgflinewidth);
			\pgfmathparse{\spyfactor^2/(\spyfactor-1)}
			\begin{scope}[scale around={\spyfactor:($(\spyviewer)!\spyfactor^2/(\spyfactor^2-1)!(\spypoint)$)}]
				\addplot[plot0,mark=,very thick] table[x=N,y={exact}] {data/approx-exact-random-nlos.dat};
				\addplot[plot2,mark=,very thick, dashed] table[x=N,y={appr}] {data/approx-exact-random-nlos.dat};
			\end{scope}
		\end{scope}
		
		\node[semithick, circle, draw, minimum size=.35cm, inner sep=0pt] (spypoint2) at (\spypointMax) {};
		\node[semithick, circle, draw, minimum size=2cm, inner sep=0pt] (spyviewer2) at (\spyviewerMax) {};
		\draw (spypoint2) edge (spyviewer2);
		\begin{scope}
			\clip (spyviewer2) circle (1cm-.5\pgflinewidth);
			\pgfmathparse{\spyfactor^2/(\spyfactor-1)}
			\begin{scope}[scale around={\spyfactor:($(\spyviewerMax)!\spyfactor^2/(\spyfactor^2-1)!(\spypointMax)$)}]
				\addplot[plot0,mark=,very thick] table[x=N,y={exact}] {data/approx-exact-random-nlos.dat};
				\addplot[plot2,mark=,very thick, dashed] table[x=N,y={appr}] {data/approx-exact-random-nlos.dat};
			\end{scope}
		\end{scope}
	\end{axis}
\end{tikzpicture}
	\caption{Comparison of the exact and approximate ergodic capacities from \eqref{eq:erg-cap-varying-nlos-exact} and \eqref{eq:erg-cap-varying-nlos-approx}, respectively, for an \gls{nlos} communication system with phase hopping and ${\tilde{N}}$ available links.}
	\label{fig:comparison-erg-cap-exact-approx-varying-nlos}
\end{figure}
Overall, it can be observed in that the approximation from \eqref{eq:erg-cap-varying-nlos-approx} is less than the exact value, i.e., it is a lower bound\footnote{We conjecture that this observation holds true in general for all ${\tilde{N}}$. Unfortunately, we were not able to prove this rigorously at this point.}.
It can, therefore, serve as a worst case design guideline, which is particularly useful for ultra-reliable communication systems.

{%
Both expressions of the ergodic capacity \eqref{eq:erg-cap-varying-nlos-exact} and \eqref{eq:erg-cap-varying-nlos-approx} in Lemma~\ref{lem:erg-cap-varying-nlos-exact-n-tilde} and \ref{lem:erg-cap-varying-nlos-n-tilde}, respectively, are independent of the realization of $\tilde{\varphi_i}$, since it only provides a constant offset for $\theta_i$.
The outage probability from \eqref{eq:def-outage-prob-varying} will therefore be \num{0}, if a rate $R$ less than the ergodic capacity $C_{\text{erg,NLOS}}$ is used for transmission and \num{1} otherwise.
This observation is summarized in the following corollary.
\begin{cor}[{Outage Probability \Gls{nlos} for $\tilde{N}$ Available Links}]\label{cor:out-prob-varying-nlos}
	Consider the previously described \gls{ris}-assisted slow fading communication scenario without \gls{csi} at the transmitter and \gls{ris}.
	There is no \gls{los} connection, i.e., $\hLOS=0$.
	The \gls{ris} applies phase hopping with \gls{iid} and uniformly distributed $\theta_i$.
	Let $\tilde{N}$ out of all $N$ links be available, i.e., $\abs{h_i}=\abs{g_i}=1$ for $i=1, \dots{}, \tilde{N}$ with $\tilde{N}\leq N$.
	The outage probability is then given as
	\begin{equation}
		\varepsilon_{\text{NLOS},\tilde{N}} = \step\left(R - C_{\text{erg,NLOS}}(\tilde{N})\right)\,.
	\end{equation}
\end{cor}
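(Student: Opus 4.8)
The plan is to read off the result directly from the definition of the outage probability in the phase-hopping regime, \eqref{eq:def-outage-prob-varying}, which reads $\varepsilon = \Pr(\ergCap < R)$. The entire statement reduces to the single observation that, once $\tilde{N}$ is fixed, $\ergCap$ is a \emph{deterministic} number rather than a random variable. If that is true, then the event $\{\ergCap < R\}$ is either sure or impossible, its probability is therefore $0$ or $1$, and this two-valued behaviour is exactly what $\step(R - C_{\text{erg,NLOS}}(\tilde{N}))$ encodes.

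First I would pin down which randomness the probability in \eqref{eq:def-outage-prob-varying} refers to: it is taken over the slow-fading realization, i.e., over the channel phases $\tvp_i$ (the magnitudes being summarized by the fixed count $\tilde{N}$), whereas the fast hopping phases $\theta_i$ have already been integrated out in the expectation defining $\ergCap$ in \eqref{eq:def-erg-capac}. Hence the task is to show that $\ergCap$ does not depend on the realization of the $\tvp_i$. The key step is an invariance argument: in the NLOS case $\ergCap$ depends on the $\tvp_i$ only through the law of $\abs{\sum_{i=1}^{\tilde{N}}\exp(\imag(\theta_i+\tvp_i))}$, and since each $\theta_i$ is independent and uniform on $[0,2\pi]$, the shifted phase $\theta_i+\tvp_i \bmod 2\pi$ is again uniform on $[0,2\pi]$ for every fixed $\tvp_i$. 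Thus the joint law of $(\theta_i+\tvp_i)_i$ coincides with that of $(\theta_i)_i$, the distribution of the magnitude sum is unchanged, and so is the expectation in \eqref{eq:def-erg-capac}. This is precisely why the CDF \eqref{eq:cdf-s-n} obtained in Lemma~\ref{lem:erg-cap-varying-nlos-exact-n-tilde} already carries no dependence on the $\tvp_i$, so I can simply invoke that lemma rather than re-derive anything.

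Combining the pieces gives $\Pr(\ergCap < R) = \Pr(C_{\text{erg,NLOS}}(\tilde{N}) < R) = \step(R - C_{\text{erg,NLOS}}(\tilde{N}))$. I do not anticipate a genuine obstacle here; the argument is essentially the remark preceding the corollary made rigorous, and the substantive content was already carried by the invariance used in Lemma~\ref{lem:erg-cap-varying-nlos-exact-n-tilde}. The only point needing a word of care is the measure-zero boundary case $R = C_{\text{erg,NLOS}}(\tilde{N})$, whose value is fixed by the chosen convention for $\step$ and is in any case irrelevant to the supremum over rates in \eqref{eq:def-eps-rate}.
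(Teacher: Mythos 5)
Your proposal is correct and matches the paper's own (informal) justification: the paper likewise notes that $C_{\text{erg,NLOS}}(\tilde{N})$ is independent of the realization of the $\tvp_i$ because they only provide a constant offset to the uniformly distributed $\theta_i$, so the outage probability in \eqref{eq:def-outage-prob-varying} degenerates to the indicated step function. Your added care about which randomness the probability refers to and about the boundary case $R = C_{\text{erg,NLOS}}(\tilde{N})$ is a welcome tightening but not a different argument.
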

}

{%
\subsubsection{Overall Outage Probability}
For the outage probability in \eqref{eq:def-outage-prob-varying}, we now need to incorporate the individual probabilities that $\tilde{N}$ links are available.
This results in the following theorem for the overall outage probability of an \gls{ris} phase hopping system with $N$ \gls{ris} elements.

\begin{thm}[{Outage Probability \Gls{nlos} with Phase Hopping}]\label{thm:outage-prob-varying}
	Consider the previously described \gls{ris}-assisted slow fading communication scenario without \gls{csi} at the transmitter and \gls{ris}.
	There is no \gls{los} connection, i.e., $\hLOS=0$.
	The \gls{ris} applies phase hopping with \gls{iid} and uniformly distributed $\theta_i$.
	The connection probabilities for all links are the same, i.e., $p_i=p$ for all $i=1, \dots{}, N$.
	Then, the outage probability is given by
	\begin{equation}\label{eq:outage-prob-varying-nlos-overall}
		\varepsilon_{\text{NLOS}}
		= \sum_{i=0}^{N} \step\left(R-C_{\text{erg,NLOS}}(i)\right)\binom{N}{i}p^{i}(1-p)^{N-i}%
		\,,
	\end{equation}
	where $C_{\text{erg,NLOS}}(\tilde{N})$ is the ergodic capacity from \eqref{eq:erg-cap-varying-nlos-exact} or \eqref{eq:erg-cap-varying-nlos-approx}.
\end{thm}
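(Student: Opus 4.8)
The plan is to obtain the overall outage probability by a straightforward application of the law of total probability, conditioning on the random number of available links $\tilde{N}$. Starting from the definition in \eqref{eq:def-outage-prob-varying}, we have $\varepsilon_{\text{NLOS}} = \Pr(\ergCap < R)$, where under the \gls{nlos} assumption the only remaining randomness is which of the $N$ links are available. The key simplification, already exploited in Lemma~\ref{lem:erg-cap-varying-nlos-exact-n-tilde}, is that once the set of available links is fixed the ergodic capacity in \eqref{eq:def-erg-capac} is a deterministic quantity depending solely on the count $\tilde{N}$, because the phase offsets $\tvp_i$ are absorbed into the uniform hopping phases $\theta_i$. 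I would therefore expand
\begin{equation*}
	\varepsilon_{\text{NLOS}} = \sum_{i=0}^{N} \Pr\left(\ergCap < R \mid \tilde{N}=i\right)\Pr\left(\tilde{N}=i\right)\,.
\end{equation*}

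The first step is to identify the conditional term. By Corollary~\ref{cor:out-prob-varying-nlos}, conditioned on $\tilde{N}=i$ available links the outage probability is exactly $\step\left(R - C_{\text{erg,NLOS}}(i)\right)$, an indicator taking only the values $0$ and $1$: an outage occurs if and only if the chosen rate $R$ exceeds the (deterministic) conditional ergodic capacity $C_{\text{erg,NLOS}}(i)$. The second step is to determine the distribution of $\tilde{N}$. Since each $c_i=\abs{h_i}\abs{g_i}$ is Bernoulli-distributed with $\Pr(c_i=1)=p_i$ and, by the assumption of the theorem, all $p_i=p$ coincide, the $c_i$ are \gls{iid}. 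Hence the number of available links $\tilde{N}=\sum_{i=1}^{N}c_i$ follows the ordinary binomial distribution $\binomialdist(N,p)$, specializing the general binomial distribution $\generalbinom$ of the system model, so that $\Pr(\tilde{N}=i)=\binom{N}{i}p^{i}(1-p)^{N-i}$. Substituting both ingredients into the total-probability expansion yields \eqref{eq:outage-prob-varying-nlos-overall}.

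This argument is essentially a direct assembly of the previously established results, so I do not anticipate a genuine technical obstacle; the derivation is a bookkeeping exercise once Corollary~\ref{cor:out-prob-varying-nlos} and the binomial distribution of $\tilde{N}$ are in hand. The one point that warrants explicit care is the justification, implicit in the conditioning above, that $C_{\text{erg,NLOS}}$ depends on a given fading realization only through the count $\tilde{N}$ and not through the particular subset of active links or the specific values of the phases $\tvp_i$. This invariance was already noted in the discussion preceding Corollary~\ref{cor:out-prob-varying-nlos}, and it is precisely what permits the collapse of the conditional outage probability into the deterministic step function $\step\left(R - C_{\text{erg,NLOS}}(i)\right)$, making the total-probability sum well defined.
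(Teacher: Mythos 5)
Your proposal is correct and follows essentially the same route as the paper's own proof: a law-of-total-probability expansion over the number of available links $\tilde{N}$, invoking Corollary~\ref{cor:out-prob-varying-nlos} for the conditional step-function outage and the binomial distribution $\binomialdist(N,p)$ for $\Pr(\tilde{N}=i)$. The additional remark on why the conditional ergodic capacity depends only on the count $\tilde{N}$ (and not on the specific subset of links or the $\tvp_i$) is a point the paper handles earlier, in the discussion preceding Corollary~\ref{cor:out-prob-varying-nlos}, so nothing is missing.
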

\begin{proof}
	The outage probability is given according to \eqref{eq:def-outage-prob-varying}, which can be written as
	\begin{align*}
		\varepsilon_{\text{NLOS}}
		&= \sum_{i=0}^{N} \Pr\left(C_{\text{erg,NLOS}}(i)<R \;\middle\vert\; \tilde{N}=i\right) \Pr\left(\tilde{N}=i\right)
		\,,
	\end{align*}
	where $\tilde{N}$ describes the number of available links as before, which corresponds to the number of ones in a realization of $\vec{c}=(c_1, \dots{}, c_N)$.
	Since $c_i$ are Bernoulli-distributed with $c_i\sim\bernoulli(p)$, $\tilde{N}$ is binomially distributed with $\tilde{N}\sim\binomialdist(N, p)$.
	Hence, we have that $\Pr(\tilde{N}=i)=\binom{N}{i}p^{i}(1-p)^{N-i}$.
	The outage probability for $\tilde{N}=i$ is given according to Corollary~\ref{cor:out-prob-varying-nlos} as step function with the step occurring at $C_{\text{erg,NLOS}}(i)$.
	Combining this yields \eqref{eq:outage-prob-varying-nlos-overall}.
\end{proof}
}

{%
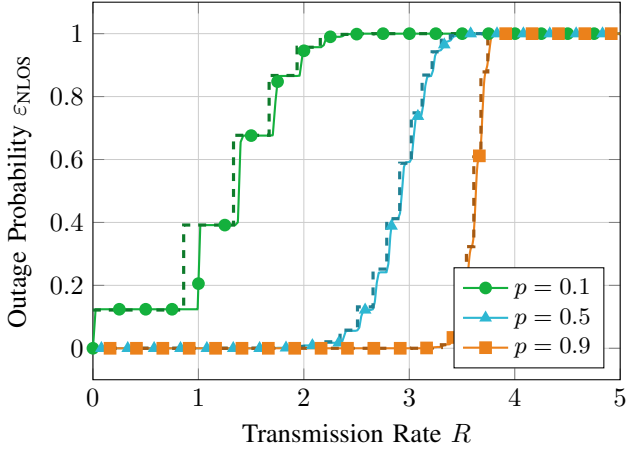
\begin{figure}
	\centering
	\begin{tikzpicture}
	\begin{axis}[
		width=.97\linewidth,
		height=.27\textheight,
		xlabel={Transmission Rate $R$},
		ylabel={Outage Probability $\varepsilon_{\text{NLOS}}$},
		ylabel near ticks,
		xmin=0,
		xmax=5,
		legend pos = south east,
		legend cell align=left,
		legend style={font=\small},
		legend entries = {{$p=0.1$}, {$p=0.5$}, {$p=0.9$}},
		ymajorgrids,
		xmajorgrids,
		xminorgrids,
		grid style={line width=.1pt, draw=gray!20},
		major grid style={line width=.25pt,draw=gray!40},
		]

		\addplot[plot3,mark=*,thick,mark repeat=100] table[x={rate},y={ecdf}] {data/out-prob-random-phase-nlos-N20-p0.100.dat};
		\addplot[plot0,mark=triangle*,thick,mark repeat=100, mark phase=33] table[x={rate},y={ecdf}] {data/out-prob-random-phase-nlos-N20-p0.500.dat};
		\addplot[plot2,mark=square*,thick,mark repeat=100, mark phase=66] table[x={rate},y={ecdf}] {data/out-prob-random-phase-nlos-N20-p0.900.dat};

		\addplot[plot3!70!black,mark=,very thick, dashed,const plot] table[x={rate},y={approx}] {data/out-prob-random-phase-nlos-N20-p0.100.dat};
		\addplot[plot0!70!black,mark=,very thick, dashed,const plot] table[x={rate},y={approx}] {data/out-prob-random-phase-nlos-N20-p0.500.dat};
		\addplot[plot2!70!black,mark=,very thick, dashed,const plot] table[x={rate},y={approx}] {data/out-prob-random-phase-nlos-N20-p0.900.dat};
		
	\end{axis}
\end{tikzpicture}
	\caption{Outage probability for a \gls{nlos} scenario. The phases of the \gls{ris} with {$N=20$} elements are randomly varied. The connection probability of the individual links is $p$. The solid lines show the \gls{ecdf} obtained by Monte Carlo simulations. {For comparison, the dashed lines indicate the approximation for large $N$ from {Lemma~\ref{lem:erg-cap-varying-nlos-n-tilde}}.}}
	\label{fig:out-prob-random-nlos}
\end{figure}
}
The outage probability $\varepsilon_{\text{NLOS}}$ is exemplarily shown in Fig.~\ref{fig:out-prob-random-nlos} for $N=20$ {with different connection probabilities $p$}.
Besides the approximation from {Lemma~\ref{lem:erg-cap-varying-nlos-n-tilde}}, we show results obtained from \gls{mc} simulations with {\num{2000}} slow-fading realizations of $\vec{h}$ and $\vec{g}$, each containing {\num{10000}} fast-fading realizations of $\vec{\theta}$.
The source code to reproduce the figure can be found in \cite{BesserGitlab}.
First, the step-like behavior can immediately be seen from Fig.~\ref{fig:out-prob-random-nlos}.
{%
As expected, the outage probability decreases for an increasing connection probability $p$.
For $p\to 1$, the curve approaches a single step function, i.e., if there are always $N$ connections available, the outage probability is exactly zero for rates below $C_{\text{erg,NLOS}}(N)$ and jumps to one above that threshold.
}

{%
In ultra-reliable communications, the application usually has a tolerated outage probability, typically less than $10^{-5}$.
It is therefore of interest to the communication system designer which rate is the highest, such that the outage probability is at most the tolerated one.
This quantity is known as the $\varepsilon$-outage capacity $\epsrate$~\cite{Tse2005}.
Based on Theorem~\ref{thm:outage-prob-varying}, we specify $\epsrate$ in the following corollary.
\begin{cor}[{$\varepsilon$-Outage Capacity for \Gls{nlos} with \Gls{ris} Phase Hopping}]\label{cor:eps-out-capac-nlos}
	There exist $N+1$ different values of the $\varepsilon$-outage capacities $R^{\varepsilon}$ for $0\leq \varepsilon \leq 1$.
	They are given by the ergodic capacities $C_{\text{erg,NLOS}}(\tilde{N})$ from Lemmas~\ref{lem:erg-cap-varying-nlos-exact-n-tilde} and \ref{lem:erg-cap-varying-nlos-n-tilde} with $\tilde{N}=0, \dots{}, N$.
	In particular, for a given (tolerated) outage probability $\varepsilon$, the $\varepsilon$-outage capacity $\epsrate$ is given as
	\begin{equation*}
		\epsrate = C_{\text{erg,NLOS}}\left(\inv{F_{\tilde{N}}}(\varepsilon)\right)\,,
	\end{equation*}
	where $\inv{F_{\tilde{N}}}$ is the quantile function of $\tilde{N}$.
\end{cor}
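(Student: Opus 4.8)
The plan is to extract the $\varepsilon$-outage capacity directly from the staircase shape of the outage probability established in Theorem~\ref{thm:outage-prob-varying}. Specialising the definition \eqref{eq:def-eps-rate} to the phase-hopping channel gives $\epsrate = \sup\{R\ge 0 : \varepsilon_{\text{NLOS}}(R)\le\varepsilon\}$ with $\varepsilon_{\text{NLOS}}(R)$ as in \eqref{eq:outage-prob-varying-nlos-overall}. The first thing I would record is that $R\mapsto\varepsilon_{\text{NLOS}}(R)$ is a non-decreasing step function: each summand $\step(R-C_{\text{erg,NLOS}}(i))$ switches from $0$ to $1$ exactly when $R$ passes the threshold $C_{\text{erg,NLOS}}(i)$ and then adds its binomial weight $\binom{N}{i}p^{i}(1-p)^{N-i}$ to the total. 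Since $\tilde N$ ranges over $0,\dots,N$, there are at most $N+1$ thresholds and hence at most $N+1$ admissible rate levels, each coinciding with one of the capacities $C_{\text{erg,NLOS}}(\tilde N)$; this already yields the first assertion once the thresholds are shown to be distinct.

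The key non-routine step is to prove strict monotonicity, $C_{\text{erg,NLOS}}(n)<C_{\text{erg,NLOS}}(n+1)$ for all $n$, which both separates the $N+1$ levels and fixes their order. I would write the relevant phasor sum as $Z_n=\sum_{i=1}^{n}\exp(\imag(\tvp_i+\theta_i))$, so that (since the $\theta_i$ are i.i.d.\ uniform) $Z_{n+1}=Z_n+W$ with $W$ an independent isotropic unit phasor, and $C_{\text{erg,NLOS}}(n)=\expect[Z_n]{\log_2(1+\abs{Z_n}^2)}$ by \eqref{eq:def-erg-capac}. Conditioning on $Z_n=z$ and averaging the phase of $W$ via the classical identity $\frac{1}{2\pi}\int_0^{2\pi}\log_2(a+b\cos\phi)\,\diff\phi=\log_2\frac{a+\sqrt{a^2-b^2}}{2}$ (here $a=2+\abs{z}^2$, $b=2\abs{z}$) gives a closed form that I can compare against $\log_2(1+\abs{z}^2)$; the comparison reduces to the elementary inequality $\sqrt{4+\abs{z}^4}\ge\abs{z}^2$, which is strict for every finite $z$. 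Taking the outer expectation over $Z_n$ then yields $C_{\text{erg,NLOS}}(n+1)>C_{\text{erg,NLOS}}(n)$. (The same ordering is also visible from the increasing approximation \eqref{eq:erg-cap-varying-nlos-approx} and from Fig.~\ref{fig:comparison-erg-cap-exact-approx-varying-nlos}.)

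With the ordering $C_{\text{erg,NLOS}}(0)<\dots<C_{\text{erg,NLOS}}(N)$ in hand, the thresholds appear in the same order as the index $i$, so on each interval $R\in[C_{\text{erg,NLOS}}(k),C_{\text{erg,NLOS}}(k+1))$ the outage probability equals the partial sum $\sum_{i=0}^{k}\binom{N}{i}p^{i}(1-p)^{N-i}=F_{\tilde N}(k)$, i.e.\ the CDF of $\tilde N$ evaluated at $k$. Inverting this staircase finishes the proof: for a tolerated level $\varepsilon$ the supremum of admissible rates is the threshold attached to the smallest index whose level first exceeds $\varepsilon$, and identifying that index with the quantile $\inv{F_{\tilde N}}(\varepsilon)$ gives $\epsrate=C_{\text{erg,NLOS}}(\inv{F_{\tilde N}}(\varepsilon))$.

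I expect the monotonicity argument above to carry the real content, but the delicate point that remains is the boundary behaviour of the $\sup$ when $\varepsilon$ coincides exactly with one of the jump heights $F_{\tilde N}(k)$: there the admissible set includes the whole flat segment at height $F_{\tilde N}(k)$, so the convention for $\inv{F_{\tilde N}}$ (and for the value of $\step$ at $0$) must be pinned down to make the closed form hold at these measure-zero values of $\varepsilon$. These ties do not affect the statement for generic $\varepsilon$ and can be absorbed into the definition of the quantile function.
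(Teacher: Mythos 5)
Your proposal is correct and follows the route the paper intends: the corollary is stated without a separate proof, as a direct inversion of the staircase outage probability of Theorem~\ref{thm:outage-prob-varying}, whose plateau heights are exactly the partial binomial sums $F_{\tilde{N}}(k)$ and whose jump locations are the capacities $C_{\text{erg,NLOS}}(k)$. What you add beyond the paper is the one ingredient it silently assumes, namely that the thresholds are distinct and correctly ordered, i.e.\ $C_{\text{erg,NLOS}}(n)<C_{\text{erg,NLOS}}(n+1)$. Your argument for this is sound: writing $Z_{n+1}=Z_n+W$ with $W$ an independent isotropic unit phasor, conditioning on $Z_n=z$ gives $1+\abs{Z_{n+1}}^2=(2+\abs{z}^2)+2\abs{z}\cos\phi$ with $\phi$ uniform, and the identity $\frac{1}{2\pi}\int_0^{2\pi}\log\left(a+b\cos\phi\right)\diff\phi=\log\frac{a+\sqrt{a^2-b^2}}{2}$ (valid here since $a-\abs{b}=1+(1-\abs{z})^2>0$) reduces the comparison to $\sqrt{4+\abs{z}^4}>\abs{z}^2$, which is strict; the paper instead relies on the visibly increasing approximation \eqref{eq:erg-cap-varying-nlos-approx} and Fig.~\ref{fig:comparison-erg-cap-exact-approx-varying-nlos}, so your exact phase-average argument is a genuine strengthening. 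Your closing caveat about tie values $\varepsilon=F_{\tilde{N}}(k)$ is also well taken: with the convention $\step(0)=0$ implicit in ``outage iff $\instCap<R$,'' the supremum in \eqref{eq:def-eps-rate} yields $C_{\text{erg,NLOS}}(k+1)$ at $\varepsilon=F_{\tilde{N}}(k)$ while the left-continuous quantile gives $C_{\text{erg,NLOS}}(k)$; this affects only finitely many $\varepsilon$ and is absorbed into the choice of $\inv{F_{\tilde{N}}}$, exactly as you say, but it is a discrepancy the paper does not acknowledge.
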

Note that this implies that the smallest \emph{non-zero} $\varepsilon$-outage capacity is given by $C_{\text{erg,NLOS}}(1)$.
In this case, exactly \num{1} out of $N$ links is available.
Vice versa, the smallest achievable outage probability is given as $F_{\tilde{N}}(0)$.
For the homogeneous case of equal connection probabilities $p_i=p$, this is simply $(1-p)^{N}$.

This is illustrated in Fig.~\ref{fig:eps-capac-nlos}, where the $\varepsilon$-outage capacities for $N=20$ and $N=50$ are shown for different values of $p$.
As expected, it can be seen that $\epsrate$ increases with an increasing connection probability $p$.
In particular, for $N=20$ and $p=0.1$, the smallest attainable outage probability is around $0.12$, i.e., the $\varepsilon$-outage capacity is zero for tolerated outage probabilities less than around \SI{12}{\percent}.
In contrast, for a higher value of $p=0.5$, the smallest outage probability is around $10^{-6}$.
Similarly, $\epsrate$ increases with a higher number of \gls{ris} elements $N$ for a fixed connection probability $p$.
In the extreme case that $p=1$, i.e., all links are always available, it is even possible to achieve a positive \gls{zoc}~\cite{Besser2021zoc} without perfect \gls{csi} at the transmitter.
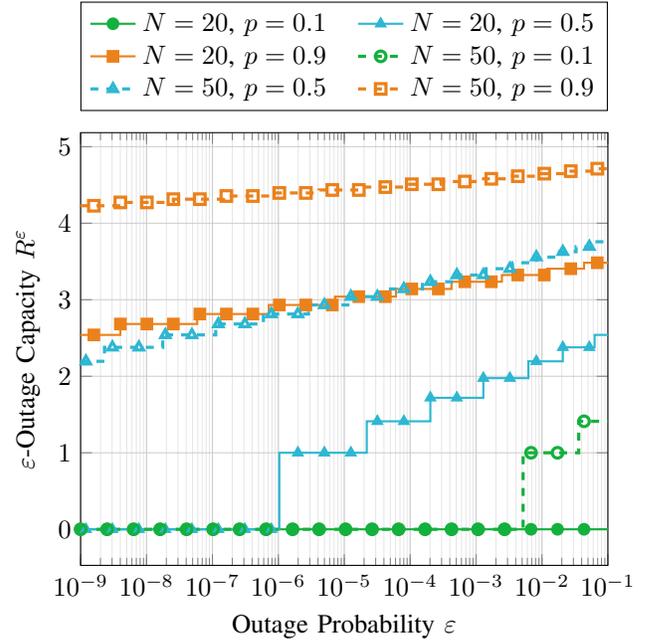
\begin{figure}[t]
	\centering
	\begin{tikzpicture}%
	\begin{axis}[
		width=.97\linewidth,
		height=.3\textheight,
		xlabel={Outage Probability $\varepsilon$},
		ylabel={$\varepsilon$-Outage Capacity $\epsrate$},
		xmin=1e-9,
		xmax=1e-1,
		xmode=log,
		legend pos = south east,
		legend cell align=left,
		legend columns = 2,
		legend style = {
			at = {(1, 1.05)},
			anchor = south east,
			/tikz/every even column/.append style={column sep=0.33cm}
		},
		ymajorgrids,
		xmajorgrids,
		xminorgrids,
		grid style={line width=.1pt, draw=gray!20},
		major grid style={line width=.25pt,draw=gray!40},
		mark options={solid},
		]
		\addplot[plot3,mark=*,thick,mark repeat=10, const plot] table[x=eps,y=exact] {data/eps-capac-N20-p0.100.dat};
		\addlegendentry{$N=20$, $p=0.1$};
		\addplot[plot0,mark=triangle*,thick,mark repeat=10, mark phase=3, const plot] table[x=eps,y=exact] {data/eps-capac-N20-p0.500.dat};
		\addlegendentry{$N=20$, $p=0.5$};
		\addplot[plot2,mark=square*,thick,mark repeat=10, mark phase=6, const plot] table[x=eps,y=exact] {data/eps-capac-N20-p0.900.dat};
		\addlegendentry{$N=20$, $p=0.9$};
		
		\addplot[plot3,mark=o,very thick,mark repeat=10, const plot, dashed] table[x=eps,y=exact] {data/eps-capac-N50-p0.100.dat};
		\addlegendentry{$N=50$, $p=0.1$};
		\addplot[plot0,mark=triangle,very thick,mark repeat=10, mark phase=3, const plot, dashed] table[x=eps,y=exact] {data/eps-capac-N50-p0.500.dat};
		\addlegendentry{$N=50$, $p=0.5$};
		\addplot[plot2,mark=square,very thick,mark repeat=10, mark phase=6, const plot, dashed] table[x=eps,y=exact] {data/eps-capac-N50-p0.900.dat};
		\addlegendentry{$N=50$, $p=0.9$};
	\end{axis}
\end{tikzpicture}
	\vspace*{-.25em}
	\caption{$\varepsilon$-outage capacity for an \gls{nlos} scenario. The phases of the \gls{ris} with $N$ elements are randomly varied. The connection probability of the individual links is $p$.}
	\label{fig:eps-capac-nlos}
\end{figure}
}

{%
For the results above, we assume an intermittent channel model for $\abs{h_i}$ and $\abs{g_i}$.
In the following, we provide an approximation of the outage probability for more general distributions of $\abs{h_i}$ and $\abs{g_i}$.
However, it should be noted that this might not always yield closed-form solutions.

\begin{thm}[{Outage Probability \Gls{nlos} with Phase Hopping and General Fading Distribution}]\label{thm:out-prob-general-fading}
	Consider the previously described \gls{ris}-assisted slow fading communication scenario without \gls{csi} at the transmitter and \gls{ris}.
	There is no \gls{los} connection, i.e., $\hLOS=0$.
	The \gls{ris} applies phase hopping with \gls{iid} and uniformly distributed $\theta_i$.
	The path losses $\abs{h_i}$ and $\abs{g_i}$ are distributed according to $F_{\abs{h_i}}$ and $F_{\abs{g_i}}$, respectively.
	The outage probability is then approximated for large $N$ by
	\begin{equation}
		\varepsilon_{\text{NLOS}} \approx F_{\sigma^2}\left(\frac{1}{2\inv{\E}\left(R \log(2)\right)}\right)\,,
	\end{equation}
	where $F_{\sigma^2}$ is the distribution function of
	\begin{equation}
		\sigma^2 = \frac{1}{2}\sum_{i=1}^{N}\abs{h_i}^2\abs{g_i}^2
	\end{equation}
	and $\inv{\E}$ the inverse function of $\E(x)=-\exp(x)\Ei(-x)$ for $x>0$.
\end{thm}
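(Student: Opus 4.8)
The plan is to turn the ergodic capacity into a deterministic, monotone function of the slow-fading energy $\sigma^2=\tfrac12\sum_{i=1}^N\abs{h_i}^2\abs{g_i}^2$ and then transport the outage probability through that function. First I would condition on a fixed slow-fading realization, i.e.\ fix all $\abs{h_i}$, $\abs{g_i}$, and $\tvp_i$, so that the only randomness in $H=\sum_{i=1}^N\abs{h_i}\abs{g_i}\exp(\imag(\tvp_i+\theta_i))$ comes from the independent, uniformly distributed hopping phases $\theta_i$. Each summand is then a random phasor of fixed length $\abs{h_i}\abs{g_i}$ with uniform phase, hence has zero mean and isotropic covariance $\tfrac12\abs{h_i}^2\abs{g_i}^2\,\mat{I}$ in the $(\operatorname{Re},\operatorname{Im})$-plane. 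By the bivariate central limit theorem, for large $N$ the sum $H$ is approximately circularly-symmetric complex Gaussian with $\operatorname{Re}(H)$ and $\operatorname{Im}(H)$ independent, each of variance $\sigma^2$, so that (conditionally on the slow fading) $\abs{H}^2$ is exponentially distributed with mean $2\sigma^2$.

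Next I would substitute this conditional exponential law into the ergodic-capacity expectation \eqref{eq:def-erg-capac}. This is precisely the integral already evaluated in the proof of Lemma~\ref{lem:erg-cap-varying-nlos-n-tilde}, with the exponential mean $\tilde N$ replaced by $2\sigma^2$, giving $\ergCap=\expect[S^2\sim\expdist(1/(2\sigma^2))]{\log_2(1+S^2)}=-\exp(1/(2\sigma^2))\Ei(-1/(2\sigma^2))/\log 2=\E(1/(2\sigma^2))/\log 2$, using the definition $\E(x)=-\exp(x)\Ei(-x)$. Thus the ergodic capacity, after the fast fading has been averaged out, is a deterministic function of $\sigma^2$ alone.

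Finally, since an outage now occurs exactly when this deterministic quantity falls below $R$ and the only remaining randomness is that of the slow fading through $\sigma^2$, I would invert the condition $\ergCap<R$. Using that $\E$ is strictly decreasing on $(0,\infty)$---which follows from the bound $\E(x)<1/x$ together with $\E'(x)=\E(x)-1/x<0$, and which lets one define the decreasing inverse $\inv{\E}$---the chain $\ergCap<R\iff\E(1/(2\sigma^2))<R\log 2\iff 1/(2\sigma^2)>\inv{\E}(R\log 2)\iff\sigma^2<1/(2\inv{\E}(R\log 2))$ holds. Taking the probability over the slow fading then yields $\varepsilon_{\text{NLOS}}=\Pr(\sigma^2<1/(2\inv{\E}(R\log 2)))=F_{\sigma^2}(1/(2\inv{\E}(R\log 2)))$, which is the claimed expression.

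The hard part will be making the complex-Gaussian approximation rigorous for phasors of \emph{unequal} magnitudes: Lemma~\ref{lem:erg-cap-varying-nlos-n-tilde} could invoke the equal-length result of \cite{Jammalamadaka2001}, but here one needs a Lindeberg/Lyapunov-type condition guaranteeing that no single $\abs{h_i}^2\abs{g_i}^2$ dominates the total $\sum_i\abs{h_i}^2\abs{g_i}^2$. This is exactly what restricts the statement to large $N$ and to an approximation ($\approx$) rather than an equality. A secondary, routine check is the strict monotonicity of $\E$, which is what makes the inversion step and the single-valuedness of $\inv{\E}$ legitimate.
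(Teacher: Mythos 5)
Your proposal is correct and follows essentially the same route as the paper's own proof: a central limit theorem approximation of $H$ as a zero-mean circularly-symmetric Gaussian with per-component variance $\sigma^2$ (hence $\abs{H}^2$ exponential with mean $2\sigma^2$), evaluation of the resulting ergodic capacity as $\E\left(1/(2\sigma^2)\right)/\log 2$, and inversion via the strict monotonicity of $\E$. Your added remarks on the Lindeberg-type condition for unequal phasor magnitudes and the explicit check $\E'(x)=\E(x)-1/x<0$ are slightly more careful than the paper's footnote, but do not change the argument.
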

\begin{proof}
	The proof can be found in Appendix~\ref{app:general-distributions}.
\end{proof}
\begin{rem}
	The distribution function of $\sigma^2$ can be determined by its characteristic function $\phi_{\sigma^2}$, if $\abs{h_i}^2$ and $\abs{g_i}^2$ admit characteristic functions $\phi_{\abs{h_i}^2}$ and $\phi_{\abs{g_i}^2}$, respectively.
	First, one needs to determine the characteristic function $\phi_{c_i^2}$ of $c_i^2=\abs{h_i}^2\abs{g_i}^2$.
	For independent $c_i^2$, it immediately follows that
	\begin{equation}
		\phi_{\sigma^2}(t) = \prod_{i=1}^{N} \phi_{c_i^2}\left(\frac{t}{2}\right) = \left(\phi_{c_1^2}\left(\frac{t}{2}\right)\right)^N
	\end{equation}
\end{rem}

}

{%
\begin{rem}[System Design for Ultra-Reliable Communications]
	Assume that a setup with a \gls{ris} of fixed size $N$ and tolerated outage probability $\varepsilon$ is given.
	A system designer, can use the above results to adjust the transmission rate such that the outage requirement is met.
	First, the blockage/connection probability~$p$ needs to be estimated.
	Second, the $\varepsilon$-outage capacity $\epsrate$ is calculated according to Corollary~\ref{cor:eps-out-capac-nlos}.
	Third, the transmission rate of the system needs to be adjusted to be less than $\epsrate$.
\end{rem}
}
\subsection{Line-of-Sight Scenario}\label{sub:varying-los}
If we have an additional \gls{los} component, the efficient channel is given as
\begin{equation}
	H = {a\exp(\imag \pLOS) + \sum_{i=1}^{N}{c_i}\exp\left(\imag(\tilde{\varphi}_i + \theta_i)\right)}\,,
\end{equation}
where $a>0$ is the fixed absolute value of the \gls{los} component and $\pLOS\sim\unif[0, 2\pi]$ its slow-fading phase.

{%
The outage probability can be approximated similarly to the \gls{nlos} scenario.
}

\begin{thm}[Outage Probability \Gls{los} with Phase Hopping]\label{thm:outage-prob-varying-los}
	Consider the previously described \gls{ris}-assisted slow fading communication scenario without \gls{csi} at the transmitter and \gls{ris}.
	There exists a \gls{los} connection with absolute value $a$ between transmitter and receiver.
	The \gls{ris} applies phase hopping with \gls{iid} and uniformly distributed $\theta_i$.
	{The connection probabilities for all \gls{nlos} links are the same, i.e., $p_i=p$ for all $i=1, \dots{}, N$.}
	The outage probability for this scenario can be approximated as
	{%
	\begin{equation}\label{eq:outage-prob-varying-los}
		\varepsilon_{\text{LOS}} =
		\sum_{i=0}^{N}\step\left(R - C_{\text{erg,LOS}}(i)\right)\binom{N}{i}p^{i}(1-p)^{N-i}\,,
	\end{equation}
	}
	with
	\begin{equation}\label{eq:erg-capac-random-los-integral}
		\begin{split}
			&C_{\text{erg,LOS}}{(\tilde{N})} \approx\\
			&\quad \int_{0}^{\infty}\frac{1}{{\tilde{N}}}\log_2(1+s)\exp\left(\frac{-(a^2+s)}{{\tilde{N}}}\right) I_0\left(\frac{2a}{{\tilde{N}}}\sqrt{s}\right)\diff{s} \,,
		\end{split}
	\end{equation}
	and $I_0$ being the modified Bessel function of the first kind and order zero~\cite[Eq.~9.6.16]{Abramowitz1972}.
	{For $\tilde{N}=0$, we have that $C_{\text{erg,LOS}}(0)=\log_2(1+a^2)$.}
\end{thm}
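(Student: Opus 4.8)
The plan is to follow the same three-step structure that yielded the NLOS result in Theorem~\ref{thm:outage-prob-varying}: first derive the ergodic capacity $C_{\text{erg,LOS}}(\tilde{N})$ for a fixed number $\tilde{N}$ of available links, then observe that conditioning on $\tilde{N}$ turns the outage into a step function, and finally average over the binomial law of $\tilde{N}$.

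First I would fix $\tilde{N}$ available links and write the effective channel as $H = a\exp(\imag\pLOS) + \sum_{i=1}^{\tilde{N}}\exp(\imag(\tvp_i+\theta_i))$. Since the $\theta_i$ are \gls{iid} uniform on $[0,2\pi]$, the phases $\tvp_i+\theta_i\bmod 2\pi$ remain \gls{iid} uniform, so the NLOS sum $Z=\sum_{i=1}^{\tilde{N}}\exp(\imag(\tvp_i+\theta_i))$ has a rotationally invariant distribution whose law does not depend on the constant offsets $\tvp_i$ nor on $\pLOS$. Exactly as in the proof of Lemma~\ref{lem:erg-cap-varying-nlos-n-tilde}, I would invoke the large-$\tilde{N}$ normal approximation from \cite[Sec.~3.4.1]{Jammalamadaka2001} to model $Z$ as a circularly-symmetric complex Gaussian whose real and imaginary parts are independent and $\normaldist(0,\tilde{N}/2)$. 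Because $Z$ is isotropic, $\abs{H}$ is distributed as $\abs{a+Z}$, and with $Z=X+\imag Y$ the squared amplitude is $(a+X)^2+Y^2$; hence $\abs{H}$ is Rice-distributed with non-centrality $a$ and per-component variance $\tilde{N}/2$.

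Second, I would push the Rice density of the amplitude $r=\abs{H}$ through the substitution $s=r^2$ to obtain the (scaled) noncentral chi-squared density $f_{\abs{H}^2}(s)=\frac{1}{\tilde{N}}\exp(-(s+a^2)/\tilde{N})\,I_0(2a\sqrt{s}/\tilde{N})$, where the factor $1/\tilde{N}$ and the Bessel argument $2a\sqrt{s}/\tilde{N}$ arise from the Jacobian $\diff{s}=2r\,\diff{r}$ together with $2\sigma^2=\tilde{N}$. Substituting this density into $C_{\text{erg,LOS}}(\tilde{N})=\expect[\vec{\theta}]{\log_2(1+\abs{H}^2)}=\int_0^\infty \log_2(1+s)\,f_{\abs{H}^2}(s)\,\diff{s}$ reproduces precisely the integral in \eqref{eq:erg-capac-random-los-integral}. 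The degenerate case $\tilde{N}=0$ is immediate: with no NLOS links $H=a\exp(\imag\pLOS)$, so $\abs{H}^2=a^2$ is deterministic and $C_{\text{erg,LOS}}(0)=\log_2(1+a^2)$.

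Third, for each fixed $\tilde{N}$ the ergodic capacity is a deterministic constant independent of the slow-fading phases, so just as in Corollary~\ref{cor:out-prob-varying-nlos} the conditional outage probability is the step function $\step(R-C_{\text{erg,LOS}}(\tilde{N}))$; averaging over $\tilde{N}\sim\binomialdist(N,p)$ with weights $\binom{N}{i}p^{i}(1-p)^{N-i}$ then gives \eqref{eq:outage-prob-varying-los}. The main obstacle is the first step: the Gaussian, and hence Rician, approximation is only asymptotic in $\tilde{N}$, which is exactly why the statement carries the symbol $\approx$; identifying the resulting amplitude as Rician and matching the noncentral chi-squared density against the stated integral (in particular the normalization $1/\tilde{N}$ and the argument of $I_0$) is where care is required, whereas the step-function reduction and the binomial averaging are routine given the homogeneity assumption $p_i=p$.
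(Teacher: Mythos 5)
Your proposal is correct and follows essentially the same route as the paper's proof in Appendix~B: a central-limit Gaussian approximation of the NLOS sum for fixed $\tilde{N}$, identification of $\abs{H}^2$ with a (scaled) noncentral $\chi^2$ / Rician-squared law yielding the density $\frac{1}{\tilde{N}}\exp(-(a^2+s)/\tilde{N})I_0(2a\sqrt{s}/\tilde{N})$, the observation that the resulting ergodic capacity is deterministic so the conditional outage is a step function, and finally binomial averaging over $\tilde{N}$ together with the degenerate case $C_{\text{erg,LOS}}(0)=\log_2(1+a^2)$. The only cosmetic difference is that you invoke rotational invariance to discard $\pLOS$ up front, whereas the paper carries the $\pLOS$-dependent means through and lets them cancel in the noncentrality parameter.
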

\begin{proof}
The proof can be found in Appendix~\ref{app:proof-thm-outage-prob-los}.
\end{proof}

The ergodic capacity in \eqref{eq:erg-capac-random-los-integral} has no known closed-form expression, however, it can be efficiently calculated numerically.
This is used for the results that are presented in the following.
The source code to reproduce the calculations and simulations can be found in \cite{BesserGitlab}.

{%
Similarly to the \gls{nlos} case, we use $\tilde{N}$ in \eqref{eq:erg-capac-random-los-integral} to describe the number of available \gls{nlos} links.
}
In Fig.~\ref{fig:ergodic-capac-random-los}, the ergodic capacity $C_{\text{erg,LOS}}$ is shown for different values of the strength of the \gls{los} connection $a$ over the number of {available \gls{nlos} links $\tilde{N}$}. %
The value $a=0$ corresponds to the \gls{nlos} scenario.
The solid lines show the results of the approximation for large {$\tilde{N}$} from \eqref{eq:erg-capac-random-los-integral}.
For comparison, the dashed lines indicate results of \gls{mc} simulations with \num{1000} slow $\times$ \num{5000} fast fading samples.
First, it can be seen that the approximation for large {$\tilde{N}$} matches the simulation results accurately already for ${\tilde{N}}\geq 10$.
As expected, the ergodic capacity increases with increasing $a$ and also with increasing {$\tilde{N}$}.
{For larger $a$, the slope of the curve gets flatter, since the \gls{los} component dominates the overall channel gain.}
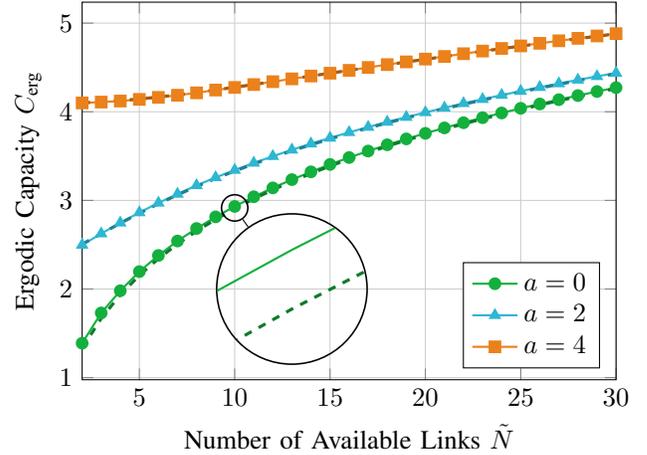
\begin{figure}
	\centering
	\begin{tikzpicture}%
	\begin{axis}[
		width=.98\linewidth,
		height=.27\textheight,
		xlabel={{Number of Available Links $\tilde{N}$}},
		ylabel={Ergodic Capacity $\ergCap$},
		xmin=2,
		xmax=30,
		legend pos=south east,
		legend entries={{$a=0$}, {$a=2$}, {$a=4$}},
		ymajorgrids,
		xmajorgrids,
		xminorgrids,
		grid style={line width=.1pt, draw=gray!20},
		major grid style={line width=.25pt,draw=gray!40},
		]
		
		\addplot[plot3,mark=*,thick] table[x=N,y={mc}] {data/erg-capac-random-phase-nlos-N30.dat};
		\addplot[plot0,mark=triangle*,thick] table[x=N,y={mc}] {data/erg-capac-random-phase-los-a2.00-N30.dat};
		\addplot[plot2,mark=square*,thick] table[x=N,y={mc}] {data/erg-capac-random-phase-los-a4.00-N30.dat};
		
		\addplot[plot3!70!black,mark=,very thick,dashed] table[x=N,y={appr}] {data/erg-capac-random-phase-nlos-N30.dat};
		\addplot[plot0!70!black,mark=,very thick,dashed] table[x=N,y={appr}] {data/erg-capac-random-phase-los-a2.00-N30.dat};
		\addplot[plot2!70!black,mark=,very thick,dashed] table[x=N,y={appr}] {data/erg-capac-random-phase-los-a4.00-N30.dat};

		\node[semithick, circle, draw, minimum size=.35cm, inner sep=0pt] (spypoint) at (\spypointErg) {};
		\node[semithick, circle, draw, minimum size=2cm, inner sep=0pt] (spyviewer) at (\spyviewerErg) {};
		\draw (spypoint) edge (spyviewer);
		\begin{scope}
			\clip (spyviewer) circle (1cm-.5\pgflinewidth);
			\pgfmathparse{\spyfactor^2/(\spyfactor-1)}
			\begin{scope}[scale around={\spyfactor:($(\spyviewerErg)!\spyfactor^2/(\spyfactor^2-1)!(\spypointErg)$)}]
				\addplot[plot3,mark=,thick] table[x=N,y={mc}] {data/erg-capac-random-phase-nlos-N30.dat};
				\addplot[plot3!70!black,mark=,very thick,dashed] table[x=N,y={appr}] {data/erg-capac-random-phase-nlos-N30.dat};
			\end{scope}
		\end{scope}
	\end{axis}
\end{tikzpicture}
	\caption{Ergodic capacity for an \gls{los} scenario. The phases of the \gls{ris} with $N$ elements are randomly varied. The solid lines show the \gls{ecdf} obtained by Monte Carlo simulations. For comparison, the dashed lines indicate the approximation for large {$\tilde{N}$} from \eqref{eq:erg-capac-random-los-integral}.}
	\label{fig:ergodic-capac-random-los}
\end{figure}

Next, we show the outage probability for $a=3$ and $N=20$ for {different values of $p$} in Fig.~\ref{fig:out-prob-random-los}.
The theoretical curves obtained by the approximation for large $N$ from \eqref{eq:outage-prob-varying-los} are indicated by dashed lines.
{%
The first interesting fact, which can be observed, is that for all $p$, we can achieve a positive \gls{zoc}, i.e., transmit at a positive rate without any outages.
This is due to the assumed model with a constantly available \gls{los} connection and not possible in general~\cite{Besser2021zoc}.
For $p=0$, only the \gls{los} connection is available and the ergodic capacity equals $\log_2(1+\abs{a}^2)$.
The outage probability is a step function with step at this point and the \gls{zoc} is equal to this ergodic capacity.
While this \gls{zoc} is the same for all $p$, it can be seen that the $\varepsilon$-outage capacity for $\varepsilon>0$ increases with increasing $p$, i.e., the available \gls{nlos} links help to improve the reliability.
Similar to the \gls{nlos} case, the outage probability function tends to a step function for $p=1$, i.e., when all \gls{nlos} connections are available all the time.
}
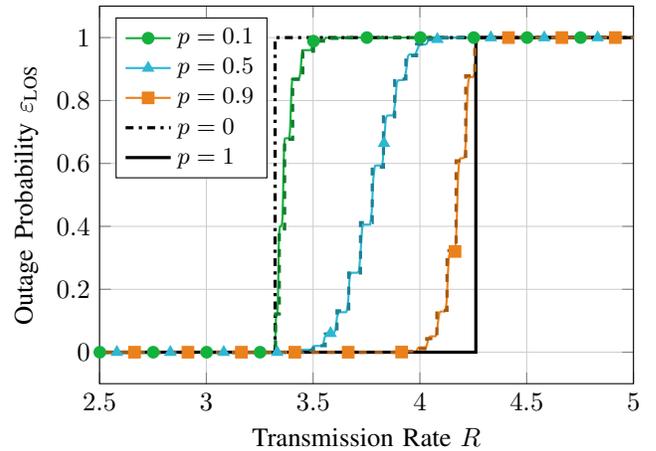
\begin{figure}
	\centering
	{\begin{tikzpicture}
	\begin{axis}[
		width=.98\linewidth,
		height=.27\textheight,
		xlabel={Transmission Rate $R$},
		ylabel={Outage Probability $\varepsilon_{\text{LOS}}$},
		ylabel near ticks,
		xmin=2.5,
		xmax=5,
		legend pos = north west,
		legend cell align=left,
		legend style={font=\small},
		ymajorgrids,
		xmajorgrids,
		xminorgrids,
		grid style={line width=.1pt, draw=gray!20},
		major grid style={line width=.25pt,draw=gray!40},
		]

		\addplot[plot3,mark=*,thick,mark repeat=100] table[x={rate},y={ecdf}] {data/out-prob-random-phase-los-a3.00-N20-p0.100.dat};
		\addlegendentry{$p=0.1$};
		\addplot[plot0,mark=triangle*,thick,mark repeat=100, mark phase=33] table[x={rate},y={ecdf}] {data/out-prob-random-phase-los-a3.00-N20-p0.500.dat};
		\addlegendentry{$p=0.5$};
		\addplot[plot2,mark=square*,thick,mark repeat=100, mark phase=66] table[x={rate},y={ecdf}] {data/out-prob-random-phase-los-a3.00-N20-p0.900.dat};
		\addlegendentry{$p=0.9$};
		
		\addplot[black, very thick, no marks, dashdotted] coordinates {(2, 0) (3.322, 0) (3.322, 1) (5, 1)};
		\addlegendentry{$p=0$};
		\addplot[black, very thick, no marks] coordinates {(2, 0) (4.262, 0) (4.262, 1) (5, 1)};
		\addlegendentry{$p=1$};
		
		\addplot[plot3!70!black,mark=,very thick, dashed,const plot] table[x={rate},y={approx}] {data/out-prob-random-phase-los-a3.00-N20-p0.100.dat};
		\addplot[plot0!70!black,mark=,very thick, dashed,const plot] table[x={rate},y={approx}] {data/out-prob-random-phase-los-a3.00-N20-p0.500.dat};
		\addplot[plot2!70!black,mark=,very thick, dashed,const plot] table[x={rate},y={approx}] {data/out-prob-random-phase-los-a3.00-N20-p0.900.dat};
	\end{axis}
\end{tikzpicture}}
	\caption{Outage probability for an \gls{los} scenario with $a=3$. The phases of the \gls{ris} with $N{=20}$ elements are randomly varied. {The connection probability of all links is $p$.} The solid lines show the \gls{ecdf} obtained by Monte Carlo simulations. For comparison, the dashed lines indicate the approximation for large $N$ from \eqref{eq:outage-prob-varying-los}. {The black curves represent the extreme cases of $p=0$ and $p=1$.}}
	\label{fig:out-prob-random-los}
\end{figure}

\section{Quantized Phases}\label{sec:quant-phases}
In the previous section, we analyzed the outage performance of an \gls{ris}-assisted communication system where the phases of the individual \gls{ris} elements could be set to arbitrary values.
However, in practice this might not be possible and only a discrete set of phase values may be available.

As in Section~\ref{sec:varying-phases}, we assume that the phases $\theta_i$ of the $N$ \gls{ris} elements change with each transmitted symbol.
However, the values of $\theta_i$ are now from a discrete set of phases $\Q$, i.e., we have
\begin{equation}\label{eq:quant-phases-set}
	\theta_i \in \Q=\left\{k \frac{2\pi}{K} \;\middle|\; k=0, \dots{}, K-1\right\}, \quad i=1, \dots{}, N,
\end{equation}
where $K$ is the number of quantization steps.

The expected value in the ergodic capacity expression from \eqref{eq:def-erg-capac} is then simply a weighted sum
\begin{equation}\label{eq:erg-capac-discrete}
	\begin{split}
		&C_{\text{erg}}=\\ &\sum_{\vec{\theta}\in \Q^N} \Pr(\vec{\theta}){\log_2\left(1 + \abs*{\hLOS + \sum_{i=1}^{N}{c_i} \exp\left(\imag\left(\tilde{\varphi_i} + \theta_i\right)\right)}^2\right)}\,.
	\end{split}
\end{equation}
Evaluating this expression exactly can be cumbersome for general $K$ and $N$, since it involves calculating all combinations of phases.
Fortunately, for sufficiently large $N$, we can apply the central limit theorem to obtain approximate results, which will be shown in the following.

\begin{thm}[Outage Probability with Phase Hopping and Quantized Phases]\label{thm:outage-prob-quant}
	Consider the previously described \gls{ris}-assisted slow fading communication scenario without \gls{csi} at the transmitter and \gls{ris}.
	There is a possible \gls{los} connection with absolute value $a$ between transmitter and receiver.
	{The connection probabilities of all \gls{nlos} links are the same, i.e., $p_i=p$ for all $i=1, \dots{}, N$.}
	The phases $\theta_i$ of the $N$ \gls{ris} elements are from the finite set $\Q$ as defined in \eqref{eq:quant-phases-set}.
	The \gls{ris} applies phase hopping with \gls{iid} and uniformly distributed $\theta_i$, i.e., $\theta_i\stackrel{\text{iid}}{\sim}\unif(\Q)$.
	For large $N$, the outage probability for this scenario can be approximated
	according to \eqref{eq:outage-prob-varying-los} for the \gls{los} case.
	In the case of $a=0$, i.e., a \gls{nlos} scenario, the outage probability is approximated by \eqref{eq:outage-prob-varying-nlos-overall}.
\end{thm}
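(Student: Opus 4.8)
The plan is to show that, in the large-$N$ regime, restricting the hopping phases to the finite set $\Q$ leaves the limiting distribution of the effective channel gain unchanged, so that the ergodic capacities $C_{\text{erg,NLOS}}(\tilde{N})$ and $C_{\text{erg,LOS}}(\tilde{N})$ — and hence the outage expressions \eqref{eq:outage-prob-varying-nlos-overall} and \eqref{eq:outage-prob-varying-los} — carry over verbatim. Exactly as in the proofs of Theorems~\ref{thm:outage-prob-varying} and~\ref{thm:outage-prob-varying-los}, I would condition on the number of available links $\tilde{N}$, which is still $\binomialdist(N,p)$-distributed and independent of the phase set, and study the random sum $S_{\tilde{N}} = \sum_{i=1}^{\tilde{N}} \exp(\imag(\tvp_i + \theta_i))$, where now $\theta_i \stackrel{\text{iid}}{\sim} \unif(\Q)$ and the slow-fading phases $\tvp_i$ are fixed constants inside the expectation \eqref{eq:def-erg-capac}.

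First I would compute the first two moments of each summand $Z_i = \exp(\imag(\tvp_i+\theta_i))$. Since $\frac{1}{K}\sum_{k=0}^{K-1}\exp(\imag\,2\pi k/K) = 0$ for every $K\geq 2$ (a vanishing sum of $K$-th roots of unity), we obtain $\expect[\theta]{Z_i}=0$, matching the continuous case. For the second-order structure, with $X_i=\cos(\tvp_i+\theta_i)$ and $Y_i=\sin(\tvp_i+\theta_i)$, the decisive quantity is the pseudo-moment $\expect[\theta]{Z_i^2}=\exp(\imag\,2\tvp_i)\,\expect[\theta]{\exp(\imag\,2\theta_i)}$, and $\expect[\theta]{\exp(\imag\,2\theta_i)}=\frac{1}{K}\sum_{k=0}^{K-1}\exp(\imag\,4\pi k/K)$ vanishes unless $K\mid 2$. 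For $K\geq 3$ this pseudo-moment is zero, so already at the per-term level $\expect[\theta]{X_i^2}=\expect[\theta]{Y_i^2}=\tfrac12$ and $\expect[\theta]{X_i Y_i}=0$ for \emph{any} $\tvp_i$; the bivariate central limit theorem (Lyapunov form, valid since the $Z_i$ are independent and bounded) then gives that $S_{\tilde{N}}$ is asymptotically a circularly-symmetric complex Gaussian with per-component variance $\tilde{N}/2$, precisely as in the continuous-phase analysis.

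The hard part will be the binary case $K=2$, where $\Q=\{0,\pi\}$ so each summand collapses to $\pm\exp(\imag\tvp_i)$; here $\expect[\theta]{Z_i^2}=\exp(\imag\,2\tvp_i)\neq 0$ and the per-term covariance is the non-isotropic $\bigl(\begin{smallmatrix}\cos^2\tvp_i & \cos\tvp_i\sin\tvp_i\\ \cos\tvp_i\sin\tvp_i & \sin^2\tvp_i\end{smallmatrix}\bigr)$, so circular symmetry is lost term by term. To recover it I would exploit that the slow-fading phases $\tvp_i$ are themselves \gls{iid} uniform on $[0,2\pi]$: by the law of large numbers the aggregate covariance concentrates,
\begin{equation*}
	\frac{1}{\tilde{N}}\sum_{i=1}^{\tilde{N}}\begin{pmatrix}\cos^2\tvp_i & \cos\tvp_i\sin\tvp_i\\ \cos\tvp_i\sin\tvp_i & \sin^2\tvp_i\end{pmatrix} \xrightarrow{\text{a.s.}} \frac{1}{2}\begin{pmatrix}1 & 0\\ 0 & 1\end{pmatrix},
\end{equation*}
while the normalized pseudo-moment $\tilde{N}^{-1}\sum_i\exp(\imag\,2\tvp_i)\to 0$. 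Thus the limiting Gaussian is again circularly symmetric with per-component variance $\tilde{N}/2$, and even two-level quantization reproduces the continuous-phase behavior.

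With the limiting law of $S_{\tilde{N}}$ in hand, the remaining steps mirror the continuous case exactly. For $a=0$, $\abs{S_{\tilde{N}}}^2$ is asymptotically exponential with mean $\tilde{N}$, recovering $C_{\text{erg,NLOS}}(\tilde{N})$ from Lemma~\ref{lem:erg-cap-varying-nlos-n-tilde}; for $a>0$, $H=\hLOS+S_{\tilde{N}}$ is asymptotically complex Gaussian with mean $\hLOS$ and per-component variance $\tilde{N}/2$, so $\abs{H}^2$ is noncentral chi-squared, yielding the Rice-type integral $C_{\text{erg,LOS}}(\tilde{N})$ of \eqref{eq:erg-capac-random-los-integral}. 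Averaging the step functions $\step(R-C_{\text{erg}}(\tilde{N}))$ over $\tilde{N}\sim\binomialdist(N,p)$ as in Theorem~\ref{thm:outage-prob-varying} then reproduces \eqref{eq:outage-prob-varying-nlos-overall} for $a=0$ and \eqref{eq:outage-prob-varying-los} for $a>0$, completing the argument.
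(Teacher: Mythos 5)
Your proposal is correct and follows the same overall skeleton as the paper's proof in Appendix~C: a central limit theorem on the real and imaginary parts of the quantized-phase sum, reduction to the continuous-phase limiting Gaussian (and hence to the ergodic capacities of Lemma~\ref{lem:erg-cap-varying-nlos-n-tilde} and Theorem~\ref{thm:outage-prob-varying-los}), followed by averaging the resulting step functions over the binomial distribution of $\tilde{N}$. Where you genuinely diverge is in the second-moment analysis, and your version is the sharper one. The paper asserts that each term $\cos(\tvp_i+\theta_i)$ has mean $0$ and variance $0.5$ ``due to the uniform quantization,'' but this variance claim holds for every $K\geq 2$ only if one averages over $\tvp_i$ as well; conditionally on $\tvp_i$ --- which is the relevant conditioning, since the ergodic capacity \eqref{eq:def-erg-capac} is an expectation over $\vec{\theta}$ alone with the slow-fading phases held fixed --- the per-term variance for $K=2$ is $\cos^2\tvp_i$, not $1/2$, and the per-term covariance matrix is rank-one. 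You correctly isolate this via the pseudo-moment $\expect[\theta]{\e^{\imag 2\theta_i}}$, which vanishes exactly when $K\geq 3$, so that conditional circular symmetry holds term by term in that regime, and you repair the $K=2$ case by a law-of-large-numbers argument on the empirical covariance over the \gls{iid} uniform $\tvp_i$, recovering isotropy only in aggregate. This closes a small gap that the paper papers over by blending the two sources of randomness; the paper's route is shorter and is backed by its numerical validation (Fig.~\ref{fig:hist-sum-cos}), while yours makes the conditioning structure and the special role of the binary phase set explicit. Your concluding steps (exponential limit for $a=0$, noncentral $\chi^2$ for $a>0$, binomial mixing) match the paper exactly.
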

\begin{proof}
{The proof can be found in Appendix~\ref{app:proof-quant-phases}.}
\end{proof}

The important implication of Theorem~\ref{thm:outage-prob-quant} is that the outage probability for a phase hopping system with phase quantization is asymptotically equal to the case with continuous phases, independent of the number of quantization levels.
For \gls{ris} with a large number of elements, we can therefore apply the results from the previous section as a design guideline, even if the \gls{ris} phases can only be set to a finite number of values.

In Fig.~\ref{fig:out-prob-quant-phases-n20}, we show the outage probability $\varepsilon$ for a \gls{nlos} scenario and quantized phases for a \gls{ris} {with $N=20$} elements {and $p=0.5$}.
The number of quantization levels is {varied}.
The solid lines indicate the \glspl{ecdf} obtained by \gls{mc} simulations with {\num{2000} slow $\times$ \num{100000}} fast-fading samples.
For comparison, the dashed line shows the approximation for large $N$. %
\begin{figure}
	\centering
	{\begin{tikzpicture}
	\begin{axis}[
		width=.97\linewidth,
		height=.25\textheight,
		xlabel={Transmission Rate $R$},
		ylabel={Outage Probability $\varepsilon_{\text{quant,NLOS}}$},
		ylabel near ticks,
		xmin=1.5,
		xmax=4,
		legend pos = north west,
		legend cell align=left,
		ymajorgrids,
		xmajorgrids,
		xminorgrids,
		grid style={line width=.1pt, draw=gray!20},
		major grid style={line width=.25pt,draw=gray!40},
		]
		\addplot[plot3,mark=*,thick,mark repeat=200] table[x={rate},y={ecdf2}] {data/quant-phase-N20-p0.500.dat};
		\addlegendentry{$K=2$};
		\addplot[plot0,mark=triangle*,thick,mark repeat=200, mark phase=50] table[x={rate},y={ecdf3}] {data/quant-phase-N20-p0.500.dat};
		\addlegendentry{$K=3$};
		\addplot[plot2,mark=square*,thick,mark repeat=200, mark phase=100] table[x={rate},y={ecdf10}] {data/quant-phase-N20-p0.500.dat};
		\addlegendentry{$K=10$};
		\addplot[plot2!70!black,mark=,very thick, dashed, const plot] table[x={rate},y={approx}] {data/out-prob-random-phase-nlos-N20-p0.500.dat};
		\addlegendentry{Approximation};
	\end{axis}
\end{tikzpicture}}
	\vspace*{-.5em}
	\caption{Outage probability for an \gls{nlos} scenario. The phases of the \gls{ris} with $N{=20}$ elements are quantized with $K$ steps. The phases are randomly and uniformly varied. {The connection probabilities are $p_i=p=0.5$ for all $i=1, \dots{}, N$.} The solid lines show the \gls{ecdf} obtained by Monte Carlo simulations. The dashed line indicates the approximation for large $N$.}
	\label{fig:out-prob-quant-phases-n20}
\end{figure}
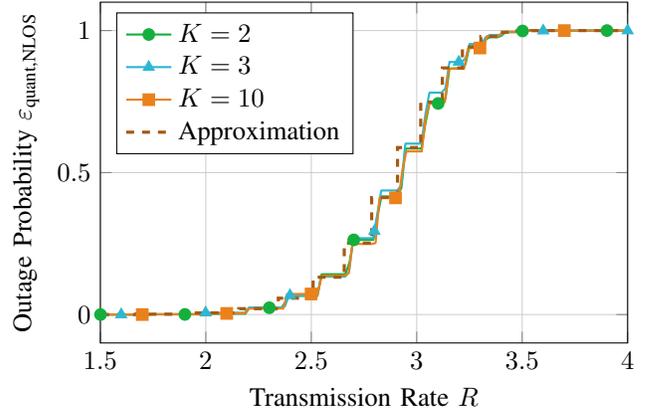
A different perspective of the results is given in Fig.~\ref{fig:out-prob-quant-phases-n20-k2}, where we fix the number of quantization levels {to $K=2$ and vary the connection probability $p$}.
It can be seen that the outage probability curves get closer to a step function with increasing~$p$.
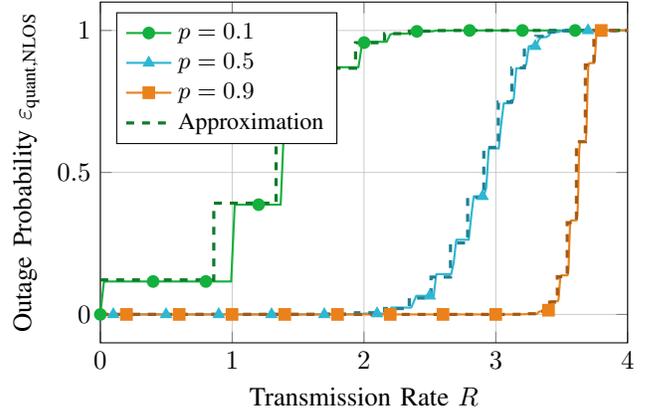
\begin{figure}
	\centering
	{\begin{tikzpicture}
	\begin{axis}[
		width=.97\linewidth,
		height=.25\textheight,
		xlabel={Transmission Rate $R$},
		ylabel={Outage Probability $\varepsilon_{\text{quant,NLOS}}$},
		ylabel near ticks,
		xmin=0,
		xmax=4,
		legend pos = north west,
		legend cell align=left,
		legend style={font=\small},
		ymajorgrids,
		xmajorgrids,
		xminorgrids,
		grid style={line width=.1pt, draw=gray!20},
		major grid style={line width=.25pt,draw=gray!40},
		]
		\addplot[plot3,mark=*,thick,mark repeat=200] table[x={rate},y={ecdf2}] {data/quant-phase-N20-p0.100.dat};
		\addlegendentry{$p=0.1$};
		\addplot[plot0,mark=triangle*,thick,mark repeat=200, mark phase=50] table[x={rate},y={ecdf2}] {data/quant-phase-N20-p0.500.dat};
		\addlegendentry{$p=0.5$};
		\addplot[plot2,mark=square*,thick,mark repeat=200, mark phase=100] table[x={rate},y={ecdf2}] {data/quant-phase-N20-p0.900.dat};
		\addlegendentry{$p=0.9$};
		
		\addplot[plot3!70!black,mark=,very thick, dashed, const plot] table[x={rate},y={approx}] {data/out-prob-random-phase-nlos-N20-p0.100.dat};
		\addlegendentry{Approximation};
		\addplot[plot0!70!black,mark=,very thick, dashed, const plot] table[x={rate},y={approx}] {data/out-prob-random-phase-nlos-N20-p0.500.dat};
		\addplot[plot2!70!black,mark=,very thick, dashed, const plot] table[x={rate},y={approx}] {data/out-prob-random-phase-nlos-N20-p0.900.dat};
	\end{axis}
\end{tikzpicture}}
	\vspace*{-.5em}
	\caption{Outage probability for an \gls{nlos} scenario. The phases of the \gls{ris} with $N={20}$ elements are quantized with $K{=2}$ steps and randomly varied. {The connection probability of all links is $p$.} The solid lines show the \gls{ecdf} obtained by Monte Carlo simulations. The dashed curves indicate the theoretical value without quantization {for large $N$}.}
	\label{fig:out-prob-quant-phases-n20-k2}
\end{figure}

\begin{rem}
	It should be emphasized that the quantization does not change the ergodic capacity for sufficiently large $N$.
	This includes the one-bit-quantization $K=2$, i.e., $\Q=\left\{0, \pi\right\}$, which can already be implemented with existing hardware~\cite{Kaina2014}.
	Furthermore, this approximation is valid since we typically have \gls{ris} with $N>50$ elements.
\end{rem}
\section{{Comparison to Different Phase Adjustment Schemes}}\label{sec:comparison}
{%
After evaluating the outage performance for \gls{ris} phase hopping, we now want to compare it with other methods to adjust the phases of the \gls{ris} elements.
The first scheme is fixing the phases to constant values for the whole transmission.
This is a natural idea since we do not assume perfect \gls{csi} at the transmitter and \gls{ris} and it is therefore not immediately clear how the \gls{ris} phases should be adjusted.

In contrast, we also compare phase hopping to the case of perfectly adjusted \gls{ris} phases.
This is the best case and therefore serves as an upper bound on the performance.
However, it should be emphasized that this requires perfect \gls{csi} at the \gls{ris}, which we do not need for phase hopping.
}

\subsection{Static RIS Phases}\label{sub:comparison-static}
A straightforward way would be to fix each phase to a constant value $\theta_i$.
Without loss of generality, we assume that $\theta_i=0$ for all $i=1, \dots{}, N$.

This yields the effective channel for static \gls{ris} phases as
\begin{equation}\label{eq:channel-static-phases}
	H_{\text{stat}} = \hLOS + \sum_{i=1}^{N}{c_i}\exp\left(\imag \tvp_i\right)\,,
\end{equation}
where we again use the shorthand $\tvp_i = \varphi_i + \psi_i \mod 2\pi$.

{Similar to the results for \gls{ris} phase hopping from Section~\ref{sec:varying-phases}, we can derive the outage probability for static \gls{ris} phases as follows.}

\begin{lem}[{Outage Probability with Static Phases {with $\tilde{N}$ Active Links}}]\label{lem:outage-prob-static-ntilde}
	Consider the previously described \gls{ris}-assisted slow fading communication scenario without \gls{csi} at the transmitter and \gls{ris}.
	The phases of the $N$ \gls{ris} elements $\theta_i$ are fixed to constant values.
	{Let $\tilde{N}$ out of all $N$ links be available, i.e., $\abs{h_i}=\abs{g_i}=1$ for $i=1, \dots{}, \tilde{N}$ with $\tilde{N}\leq N$.}
	For the \gls{nlos} scenario, i.e., $\hLOS=0$, the outage probability is exactly given by
	\begin{equation}\label{eq:outage-prob-static-nlos-exact-ntilde}
		\varepsilon_{\text{stat,NLOS}}({\tilde{N}}) = \sqrt{2^R-1} \int\limits_{0}^{\infty} J_1\left(\sqrt{2^R-1}\cdot t\right) J_0\left(t\right)^{{\tilde{N}}} \diff{t}\,.
	\end{equation}
	For large ${\tilde{N}}$, it can be approximated by
	\begin{equation}\label{eq:outage-prob-static-nlos-approx-ntilde}
		\varepsilon_{\text{stat,NLOS}}({\tilde{N}}) \approx 1 - \exp\left(-\frac{2^R-1}{{\tilde{N}}}\right)\,.
	\end{equation}
	For the \gls{los} scenario, the outage probability can be approximated by
	\begin{equation}\label{eq:outage-prob-static-los-approx-ntilde}
		\varepsilon_{\text{stat,LOS}}{\left(\tilde{N}\right)} \approx 1 - Q_1\left(\sqrt{\frac{2a^2}{{\tilde{N}}}}, \sqrt{\frac{2}{{\tilde{N}}}\left(2^R-1\right)}\right)\,,
	\end{equation}
	where $Q_M(a, b)$ denotes the Marcum Q-function~\cite{Temme1993}.
\end{lem}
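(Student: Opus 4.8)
The plan is to exploit that, in contrast to the phase hopping scheme, static phases leave no fast-fading randomness to average over: for a fixed slow-fading realization the effective channel $H_{\text{stat}}$ in \eqref{eq:channel-static-phases} is deterministic, so the instantaneous capacity is $\instCap = \log_2(1 + \abs{H_{\text{stat}}}^2)$ and the outage event of \eqref{eq:def-out-prob} reduces to $\abs{H_{\text{stat}}}^2 < 2^R - 1$. Consequently the outage probability is just the distribution function of $\abs{H_{\text{stat}}}$ evaluated at the threshold $\sqrt{2^R-1}$, and the entire task becomes identifying the law of $\abs{H_{\text{stat}}}$ over the slow-fading phases.

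For the NLOS case I would first note that $\tvp_i = \varphi_i + \psi_i \bmod 2\pi$ is again uniform on $[0,2\pi]$, since the circular convolution of two uniform densities is uniform. With $\hLOS = 0$ only the $\tilde{N}$ active terms remain, so $\abs{H_{\text{stat}}} = \abs*{\sum_{i=1}^{\tilde{N}} \exp(\imag\tvp_i)} = S_{\tilde{N}}$, which is exactly the random-phasor magnitude whose exact distribution function was already established in \eqref{eq:cdf-s-n}. Substituting $s = \sqrt{2^R-1}$ into that expression gives \eqref{eq:outage-prob-static-nlos-exact-ntilde} immediately. For the large-$\tilde{N}$ approximation I would reuse the Gaussian limit from \cite{Jammalamadaka2001} already invoked in the proof of Lemma~\ref{lem:erg-cap-varying-nlos-n-tilde}: $S_{\tilde{N}}^2$ is asymptotically exponential with mean $\tilde{N}$, whence $\Pr(S_{\tilde{N}}^2 < 2^R-1) = 1 - \exp(-(2^R-1)/\tilde{N})$, which is \eqref{eq:outage-prob-static-nlos-approx-ntilde}.

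For the LOS case the same central-limit argument turns the NLOS sum into a zero-mean circularly symmetric complex Gaussian with total variance $\tilde{N}$, so $H_{\text{stat}}$ is a complex Gaussian whose mean is the deterministic LOS phasor $a\exp(\imag\pLOS)$. Its magnitude is therefore Rician, and by the rotational invariance of the circularly symmetric Gaussian the law of $\abs{H_{\text{stat}}}$ does not depend on $\pLOS$; I would set $\pLOS = 0$ without loss of generality. Identifying the Rician non-centrality $\nu = a$ and per-component variance $\sigma^2 = \tilde{N}/2$, and matching against the standard Rician distribution function $1 - Q_1(\nu/\sigma, r/\sigma)$ evaluated at $r = \sqrt{2^R-1}$, yields the two arguments $\nu/\sigma = \sqrt{2a^2/\tilde{N}}$ and $r/\sigma = \sqrt{(2/\tilde{N})(2^R-1)}$, which is precisely \eqref{eq:outage-prob-static-los-approx-ntilde}.

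Most of the work is bookkeeping, since the distributional facts are inherited from results proved earlier. The one genuinely delicate step is the LOS case, where I must justify that the central-limit Gaussianization holds jointly for the real and imaginary parts with the correct $\tilde{N}/2$ per-component variance, and then carefully translate between the complex-Gaussian parameterization and the $(\nu,\sigma)$ convention of the Marcum Q-function so that both arguments acquire the stated $\sqrt{2/\tilde{N}}$ scaling. Lining up these normalization factors is where errors are most likely to arise.
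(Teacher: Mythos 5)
Your proposal is correct and follows essentially the same route the paper intends: the paper omits this proof but states it is based on the proofs of Lemmas~\ref{lem:erg-cap-varying-nlos-exact-n-tilde} and \ref{lem:erg-cap-varying-nlos-n-tilde} and Theorem~\ref{thm:outage-prob-varying-los}, and you correctly reuse the exact \gls{cdf} \eqref{eq:cdf-s-n} evaluated at $\sqrt{2^R-1}$, the exponential (squared-Rayleigh) limit for large $\tilde{N}$, and the Rician/noncentral-$\chi^2$ argument with $\nu=a$ and $\sigma^2=\tilde{N}/2$ for the \gls{los} case. The key observation that with static phases the randomness sits entirely in the slow-fading phases $\tvp_i$ (so the outage probability is the \gls{cdf} of $\abs{H_{\text{stat}}}$ at the threshold rather than a step function) is exactly the right reduction, and your normalization of the Marcum Q-function arguments checks out.
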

\begin{proof}
	{The proof is based on the proof of Lemmas~\ref{lem:erg-cap-varying-nlos-exact-n-tilde} and \ref{lem:erg-cap-varying-nlos-n-tilde} and Theorem~\ref{thm:outage-prob-varying-los} and is therefore omitted.}
\end{proof}

{%
Similarly to Theorem~\ref{thm:outage-prob-varying}, we now need to incorporate the probabilities that only $\tilde{N}$ out of $N$ links are available.
For equal connection probabilities $p_i$ of all links, this yields the following outage probability.
\begin{thm}[{Outage Probability with Static Phases}]\label{thm:outage-prob-static}
	Consider the previously described \gls{ris}-assisted slow fading communication scenario without \gls{csi} at the transmitter and \gls{ris}.
	The phases of the $N$ \gls{ris} elements $\theta_i$ are fixed to constant values.
	{The connection probabilities for all links are the same, i.e., $p_i=p$ for all $i=1, \dots{}, N$.}
	The outage probability is then given by
	\begin{equation}\label{eq:outage-prob-static}
		\varepsilon_{\text{stat}} = \sum_{i=0}^{N} \varepsilon_{\text{stat}}(i) \binom{N}{i}p^{i} (1-p)^{N-i}\,,
	\end{equation}
	where $\varepsilon_{\text{stat}}(i)$ is evaluated according to Lemma~\ref{lem:outage-prob-static-ntilde}.
\end{thm}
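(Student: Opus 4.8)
The plan is to mirror the total-probability argument used in the proof of Theorem~\ref{thm:outage-prob-varying}. For static phases there is no artificial fast fading, so the relevant quantity is the instantaneous capacity $\instCap = \log_2(1 + \abs{H_{\text{stat}}}^2)$ with $H_{\text{stat}}$ taken from \eqref{eq:channel-static-phases}, and by \eqref{eq:def-out-prob} the outage probability is $\varepsilon_{\text{stat}} = \Pr(\instCap < R)$. The randomness now stems from the slow-fading phases $\tvp_i$ and $\pLOS$ together with the intermittent indicators $c_i$, rather than from any hopping phases $\theta_i$.

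First I would partition the event $\{\instCap < R\}$ according to the number of available links $\tilde{N}$, i.e., the number of ones in a realization of $\vec{c}=(c_1,\dots,c_N)$, and apply the law of total probability:
\begin{equation*}
	\varepsilon_{\text{stat}} = \sum_{i=0}^{N} \Pr\left(\instCap < R \;\middle\vert\; \tilde{N}=i\right)\,\Pr\left(\tilde{N}=i\right)\,.
\end{equation*}
Since the $c_i$ are i.i.d. with $c_i\sim\bernoulli(p)$, the count $\tilde{N}$ is binomially distributed, $\tilde{N}\sim\binomialdist(N,p)$, so that $\Pr(\tilde{N}=i)=\binom{N}{i}p^{i}(1-p)^{N-i}$. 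This already supplies the weighting factor appearing in \eqref{eq:outage-prob-static}.

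The one point that needs justification is that the conditional outage probability $\Pr(\instCap < R \mid \tilde{N}=i)$ depends only on the number $i$ of active links and not on which particular subset of size $i$ is active. This follows from exchangeability: the overall phases $\tvp_i$ are i.i.d. uniform on $[0,2\pi]$, so conditioned on any index set of cardinality $i$ being active, $H_{\text{stat}}$ has the same distribution, namely $\hLOS$ plus a sum of $i$ i.i.d. unit-magnitude phasors. Hence the conditional probability equals $\varepsilon_{\text{stat}}(i)$ as computed in Lemma~\ref{lem:outage-prob-static-ntilde}, using the exact expression \eqref{eq:outage-prob-static-nlos-exact-ntilde} (or its large-$\tilde{N}$ approximation) in the NLOS case and \eqref{eq:outage-prob-static-los-approx-ntilde} in the LOS case. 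Substituting both ingredients into the total-probability sum yields \eqref{eq:outage-prob-static}.

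I do not expect a genuine obstacle here, since the argument is structurally identical to Theorem~\ref{thm:outage-prob-varying}; the only difference is that the conditional term is the nontrivial probability $\varepsilon_{\text{stat}}(i)$ from Lemma~\ref{lem:outage-prob-static-ntilde} rather than the step function $\step(R-C_{\text{erg,NLOS}}(i))$ that appears in the hopping case. Making the exchangeability step explicit is the most delicate part, but it is immediate from the i.i.d. uniform assumption on the $\tvp_i$, so the bulk of the proof reduces to assembling the conditional probabilities with the binomial weights.
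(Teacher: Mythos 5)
Your proposal is correct and follows exactly the route the paper intends: the paper omits this proof, stating only that it ``follows the idea of the proof of Theorem~\ref{thm:outage-prob-varying},'' which is precisely your law-of-total-probability decomposition over the binomially distributed number of active links $\tilde{N}$, with the conditional term replaced by $\varepsilon_{\text{stat}}(i)$ from Lemma~\ref{lem:outage-prob-static-ntilde} instead of a step function. Your explicit exchangeability argument for why the conditional probability depends only on the count $i$ (and not the identity of the active links) is a worthwhile detail that the paper leaves implicit.
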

\begin{proof}
	The proof follows the idea of the proof of Theorem~\ref{thm:outage-prob-varying} and is therefore omitted.
\end{proof}
}

{%
Similarly to \eqref{eq:erg-capac-varying-nlos-exact-hankel}, the exact outage probabilities for $\tilde{N}$ links from \eqref{eq:outage-prob-static-nlos-exact-ntilde} can be efficiently calculated using the Hankel transform as
\begin{equation*}
	\varepsilon_{\text{stat,NLOS}}\left(\tilde{N}\right) = \sqrt{2^R-1} \hankel_1\left\{\frac{J_0(t)^{\tilde{N}}}{t}\right\}\left(\sqrt{2^R-1}\right)\,.
\end{equation*}
An implementation together with the simulations can be found in \cite{BesserGitlab}.
}

\begin{figure}[t]
	\centering
	{\begin{tikzpicture}
	\begin{axis}[
		width=.98\linewidth,
		height=.27\textheight,
		xlabel={Transmission Rate $R$},
		ylabel={Outage Probability $\varepsilon_{\text{stat,NLOS}}$},
		ylabel near ticks,
		legend pos = south east,
		legend cell align=left,
		legend style={font=\small},
		legend columns = 1,
		legend entries = {{$p=0.1$ -- \Gls{ecdf}}, {$p=0.1$ -- Approx.}, {$p=0.5$ -- \Gls{ecdf}}, {$p=0.5$ -- Approx.}, {$p=0.9$ -- \Gls{ecdf}}, {$p=0.9$ -- Approx.}},
		xmin=0,
		xmax=6,
		mark options={solid},
		ymajorgrids,
		xmajorgrids,
		xminorgrids,
		grid style={line width=.1pt, draw=gray!20},
		major grid style={line width=.25pt,draw=gray!40},
	]
	\addplot[plot3,mark=*,thick, mark repeat=100] table[x={rate},y={ecdf}] {data/out-prob-const-phase-nlos-N20-p0.100.dat};
	\addplot[plot3!70!black,mark=o, very thick, dashed, mark repeat=100] table[x={rate},y={approx}] {data/out-prob-const-phase-nlos-N20-p0.100.dat};

	\addplot[plot0,mark=triangle*,thick, mark repeat=100] table[x={rate},y={ecdf}] {data/out-prob-const-phase-nlos-N20-p0.500.dat};
	\addplot[plot0!70!black,mark=triangle, very thick, dashed, mark repeat=100] table[x={rate},y={approx}] {data/out-prob-const-phase-nlos-N20-p0.500.dat};
	
	\addplot[plot2,mark=square*,thick, mark repeat=100] table[x={rate},y={ecdf}] {data/out-prob-const-phase-nlos-N20-p0.900.dat};
	\addplot[plot2!70!black,mark=square, very thick, dashed, mark repeat=100] table[x={rate},y={approx}] {data/out-prob-const-phase-nlos-N20-p0.900.dat};
	\end{axis}
\end{tikzpicture}}
	\vspace*{-.5em}
	\caption{Outage probability for an \gls{ris} with $N{=20}$ elements with constant phases. {The connection probability of all links is $p$.} The solid lines show the \gls{ecdf} obtained by Monte Carlo simulations. For comparison, the dashed lines indicate the approximation for large $N$.}
	\label{fig:out-prob-const-nlos}
\end{figure}
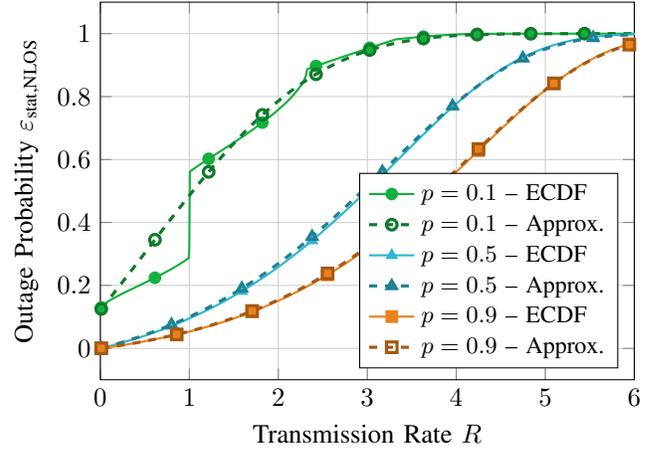
In Fig.~\ref{fig:out-prob-const-nlos}, we show the outage probability $\varepsilon_{\text{stat,NLOS}}$ for an \gls{ris}-assisted communication with static phases and no \gls{los} connection.
The results are shown for a \gls{ris} with $N{=20}$ elements {and different connection probabilities $p$}.
The solid lines are the \glspl{ecdf} obtained by \gls{mc} simulations with $10^6$ samples.
For comparison, the approximation using \eqref{eq:outage-prob-static-nlos-approx-ntilde} is shown.
The source code to reproduce all results can be found at \cite{BesserGitlab}.
From the figure, it can be observed that the approximation is accurate for ${p\geq 0.5}$.
{For small $p$, the approximation is not as good, especially for small rates $R$.
The reason behind this is that it is derived from an approximation for large $\tilde{N}$ based on the central limit theorem.
When $p$ is small, the system is more likely to have a small number of active links $\tilde{N}$, where the approximation is worse.
In this case, the exact value from \eqref{eq:outage-prob-static-nlos-exact-ntilde} should be used.
Additionally, for $p=0.1$, a step-like behavior can be observed.
The reason behind this is similar to the one explained above.
For small $p$, the influence of the outage probabilities for small $\tilde{N}$ is dominant.
The outage probability for $R\to 0$ is determined by the \gls{pdf} of the Binomial distribution for $\tilde{N}=0$.
For $p=0.1$ this is around $0.12$ and therefore clearly visible in Fig.~\ref{fig:out-prob-const-nlos}.
On the contrast, for $p=0.5$, this is only around $10^{-6}$.
The step at $R=1$ is due to the outage probability $\varepsilon_{\text{stat,NLOS}}(1)$ for $\tilde{N}=1$, which is a step function from \num{0} to \num{1} at $R=1$\footnote{{For the evaluation of \eqref{eq:outage-prob-static-nlos-exact-ntilde} for $\tilde{N}=1$, one can use expression of $I^{0}_{01}(1, \sqrt{2^R-1}, 0)$ in \cite[Table~2]{Kausel2012}.}}.
Again, this step is more pronounced for small $p$ due to the Binomial distribution.
}

\subsection{{Perfect Phase Adjustment}}\label{sub:comparison-perfect}
{%
The next \gls{ris} phase adjustment scheme is perfect phase adjustment.
This scheme provides an upper bound on the outage performance, but it should be emphasized that this requires perfect \gls{csi} at the \gls{ris}, which we do \emph{not} assume for phase hopping.
Based on the results from \cite[Lem.~1]{Bjornson2020}, we can directly provide the outage probability for the \gls{nlos} scenario with perfect \gls{ris} phase adjustment in the following.

\begin{cor}[{Outage Probability \Gls{nlos} with Perfect \Gls{ris} Phase Adjustment}]\label{cor:outage-prob-perfect-nlos}
	Consider the previously described \gls{ris}-assisted slow fading communication scenario with perfect \gls{csi} at the \gls{ris}.
	There is no \gls{los} connection, i.e., $\hLOS=0$.
	The phases of the $N$ \gls{ris} elements $\theta_i$ are adjusted to $\theta_i=-\tvp_i$.
	The connection probabilities for all links are the same, i.e., $p_i=p$ for all $i=1, \dots{}, N$.
	The outage probability is then given by
	\begin{align}
		\varepsilon_{\text{perf,NLOS}}
		&= F_{\binomialdist}\left(\sqrt{2^R-1}; N, p\right)\notag\\
		&= \sum_{i=0}^{\floor*{\sqrt{2^R-1}}}\binom{N}{i} p^{i} (1-p)^{N-i}\label{eq:outage-prob-perfect-nlos}
	\end{align}
	where $F_{\binomialdist}(\cdot; N, p)$ denotes the \gls{cdf} of a binomial distribution with $N$ independent trials and success probability $p$.
\end{cor}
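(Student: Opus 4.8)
The plan is to substitute the perfect phase-adjustment rule $\theta_i = -\tvp_i$ directly into the effective channel model \eqref{eq:overall-channel-coeff} and exploit the fact that this aligns all active contributions coherently. First I would set $\hLOS = 0$ and $\theta_i = -\tvp_i$ in \eqref{eq:overall-channel-coeff}, so that each phase term collapses to $\exp(\imag\cdot 0)=1$ and
\begin{equation*}
	H = \sum_{i=1}^{N} c_i \exp\left(\imag\left(\tvp_i + \theta_i\right)\right) = \sum_{i=1}^{N} c_i\,.
\end{equation*}
Since $c_i = \abs{h_i}\abs{g_i}\in\{0,1\}$, the resulting gain is a nonnegative integer equal to the number of active links, i.e., $H = \tilde{N}$ and hence $\abs{H}^2 = \tilde{N}^2$.

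Next I would translate the outage event into a condition on $\tilde{N}$. Inserting $\abs{H}^2 = \tilde{N}^2$ into $\instCap = \log_2(1+\abs{H}^2)$ and using the outage definition \eqref{eq:def-out-prob}, an outage occurs exactly when $\log_2(1+\tilde{N}^2) < R$, i.e., $\tilde{N}^2 < 2^R-1$, i.e., $\tilde{N} < \sqrt{2^R-1}$. Because $\tilde{N}$ takes only nonnegative integer values, this is equivalent to $\tilde{N}\leq\floor{\sqrt{2^R-1}}$, up to the measure-zero set of rates at which $\sqrt{2^R-1}$ happens to be a perfect integer.

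Finally I would sum the probability mass of $\tilde{N}$ over this range. As already established in the proof of Theorem~\ref{thm:outage-prob-varying}, the $c_i$ are independent and identically $\bernoulli(p)$-distributed, so that $\tilde{N}\sim\binomialdist(N,p)$ with $\Pr(\tilde{N}=i)=\binom{N}{i}p^{i}(1-p)^{N-i}$. This immediately yields
\begin{equation*}
	\varepsilon_{\text{perf,NLOS}} = \Pr\left(\tilde{N}\leq\floor{\sqrt{2^R-1}}\right) = \sum_{i=0}^{\floor{\sqrt{2^R-1}}}\binom{N}{i}p^{i}(1-p)^{N-i}\,,
\end{equation*}
which is precisely the binomial \gls{cdf} $F_{\binomialdist}(\sqrt{2^R-1};N,p)$ claimed in \eqref{eq:outage-prob-perfect-nlos}.

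There is no genuine technical obstacle here: the entire result rests on the coherent-combining identity $H = \tilde{N}$, after which everything reduces to evaluating a binomial tail. The only step requiring minor care is the passage from the strict inequality $\tilde{N}<\sqrt{2^R-1}$ to the floor expression, which I would justify via the integer-valuedness of $\tilde{N}$ together with the observation that the boundary rates form a negligible set; accordingly I would present the argument in the three short steps above rather than as an extended derivation.
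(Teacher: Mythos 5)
Your proposal is correct and matches the argument the paper intends: the corollary is stated as a direct consequence of the coherent-combining result in \cite[Lem.~1]{Bjornson2020}, which is exactly your identity $H=\sum_{i=1}^{N}c_i=\tilde{N}$ under $\theta_i=-\tvp_i$, followed by the binomial distribution of $\tilde{N}$. Your explicit handling of the boundary between the strict inequality $\tilde{N}<\sqrt{2^R-1}$ and the floor in the binomial \gls{cdf} is a minor point the paper glosses over, but it does not change the result.
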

}

{%
The three different phase adjustment schemes are now compared in terms of outage probability in Fig.~\ref{fig:out-prob-comparison-n20}.
The scenario is \gls{nlos} with $N=20$ \gls{ris} elements and connection probability $p=0.5$ for all links.
The first observation is that the outage probability curve for static phases is very steep.
This implies that the $\varepsilon$-outage capacity is very small and increases slowly for small $\varepsilon$.
For a tolerated outage probability of $\varepsilon=10^{-5}$, the $\varepsilon$-outage capacity for static \gls{ris} phases is less than \num{0.005}.
On the other hand, it is around \num{0.86} for \gls{ris} phase hopping and \num{1} for perfect phase adjustment.
Even though the outage probability increases faster for phase hopping than perfect phase adjustment, this shows that the proposed phase hopping scheme achieves a similar performance for small $\varepsilon$ as the best-case (perfect phase adjustment).
It should be emphasized that this can be achieved \emph{without} requiring \gls{csi} at the \gls{ris}.
Especially for ultra-reliable communications, we are interested in these very small tolerated outage probabilities, which makes \gls{ris} phase hopping a viable candidate to achieve the required performance.

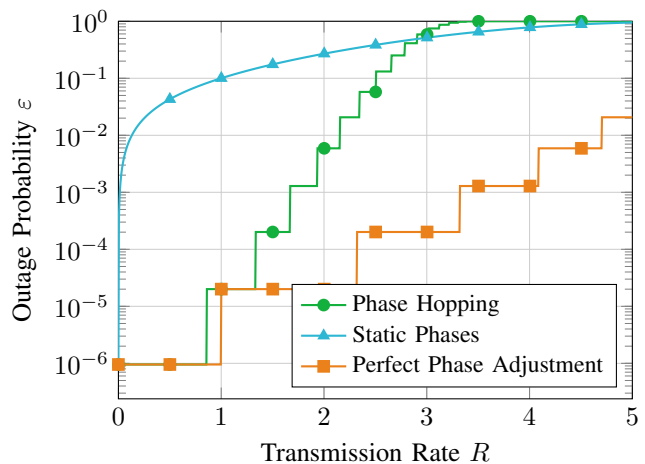
\begin{figure}[t]
	\centering
	{\begin{tikzpicture}
	\begin{axis}[
		width=.95\linewidth,
		height=.27\textheight,
		xlabel={Transmission Rate $R$},
		ylabel={Outage Probability $\varepsilon$},
		ylabel near ticks,
		legend pos = south east,
		legend cell align=left,
		legend columns = 1,
		legend style={font=\small},
		xmin=0,
		xmax=5,
		ymode=log,
		ymax=1,
		mark options={solid},
		ymajorgrids,
		xmajorgrids,
		xminorgrids,
		grid style={line width=.1pt, draw=gray!20},
		major grid style={line width=.25pt,draw=gray!40},
		]
		\addplot[plot3,mark=*,thick, mark repeat=100] table[x={rate},y={hopping}] {data/out-prob-comparison-N20-p0.500.dat};
		\addlegendentry{Phase Hopping};
		\addplot[plot0,mark=triangle*,thick, mark repeat=100] table[x={rate},y={static}] {data/out-prob-comparison-N20-p0.500.dat};
		\addlegendentry{Static Phases};
		\addplot[plot2,mark=square*,thick, mark repeat=100] table[x={rate},y={perfect}] {data/out-prob-comparison-N20-p0.500.dat};
		\addlegendentry{Perfect Phase Adjustment};
	\end{axis}
\end{tikzpicture}}
	\vspace*{-.75em}
	\caption{{Outage probability for a \gls{ris}-assisted \gls{nlos} scenario with $N=20$ \gls{ris} elements with different phase adjustment schemes. The connection probabilities are $p_i=p=0.5$ for all $i=1, \dots{}, N$.}}
	\label{fig:out-prob-comparison-n20}
\end{figure}
}
\section{Conclusion}\label{sec:conclusion}
\begin{table*}%
	\centering
	\caption{Summary of the Outage Performance Results for {Different \Gls{ris} Phase Adjustment Schemes}.}
	\label{tab:summary-outage-prob}
		{%
			\begin{tabularx}{\linewidth}{m{.15\linewidth}|MM|M}
				\toprule
				\multirow[c]{2}{*}{\textbf{Scheme}} & \multicolumn{2}{c|}{\textbf{Non-Line-of-Sight}} & {\textbf{Line-of-Sight}}\\
				& \textbf{Exact} & \textbf{Approximation} & \textbf{Approximation}\\
				\midrule
				{Phase Hopping} &
				\multicolumn{2}{c|}{Theorem~\ref{thm:outage-prob-varying}} &%
			\multirow[c]{2}{*}{Theorem~\ref{thm:outage-prob-varying-los}}\\
			(Section~\ref{sec:varying-phases} and \ref{sec:quant-phases}) &
			Lemma~\ref{lem:erg-cap-varying-nlos-exact-n-tilde}: Equation~\eqref{eq:erg-cap-varying-nlos-exact} & Lemma~\ref{lem:erg-cap-varying-nlos-n-tilde}: Equation~\eqref{eq:erg-cap-varying-nlos-approx} &
			\\[1.2em]
			Static Phases &
			\multicolumn{2}{c|}{Theorem~\ref{thm:outage-prob-static}} &%
		{Theorem~\ref{thm:outage-prob-static}}%
	\\
	(Section~\ref{sub:comparison-static}) &
	Lemma~\ref{lem:outage-prob-static-ntilde}: Equation~\eqref{eq:outage-prob-static-nlos-exact-ntilde} &
	Lemma~\ref{lem:outage-prob-static-ntilde}: Equation~\eqref{eq:outage-prob-static-nlos-approx-ntilde} &
	Lemma~\ref{lem:outage-prob-static-ntilde}: Equation~\eqref{eq:outage-prob-static-los-approx-ntilde}
	\\
	\bottomrule
\end{tabularx}
}
\end{table*}

In this work, we investigate a phase hopping scheme for \gls{ris}-assisted communication systems, which converts slow fading into artificial fast fading.
We showed how this helps to significantly improve the outage performance, {e.g., in terms of the $\varepsilon$-outage capacity}.
{In particular, for small tolerated outage probabilities, i.e., in the context of ultra-reliable communications, this scheme performs close to the best-case.}
Other advantages of the proposed scheme are that it neither requires \gls{csi} at the transmitter and the \gls{ris} nor any additional communication overhead to the \gls{ris}.

A summary of the results for the outage probabilities in the various considered scenarios can be found in Table~\ref{tab:summary-outage-prob}.

{%
For a practical implementation, it needs to be verified that the technology allows for such rapid phase adjustments.
Additionally, the phase adjustments are likely to require additional power which decreases the overall energy efficiency.
This factor needs to be incorporated in future analysis.
However, in contrast to setting specific phases at the \gls{ris}, the proposed scheme applies random phases.
Therefore, the configuring of the phase shifts does not need to be very accurate.
The presented work could also be extended by an additional beamforming optimization for a multi-antenna system in future work.
}

\appendices
{%
\section{Proof of Theorem~\ref{thm:out-prob-general-fading}}
\label{app:general-distributions}
	By the central limit theorem, the overall channel gain $H$ can be approximated for large $N$ as~\cite[Chap.~3.4]{Jammalamadaka2001}
	\begin{equation*}
		H = \sum_{i=1}^{N}\abs{h_i}\abs{g_i} \exp\left(\imag \left(\theta_i + \tvp_i\right)\right) \approx \bar{C} + \imag \bar{I}
	\end{equation*}
	with independent $\bar{C}$ and $\bar{I}$, which are normally distributed according to $\bar{C}, \bar{I}\sim\normaldist(0, \sigma^2)$.
	The variance is given as
	\begin{equation*}
		\sigma^2 = \frac{1}{2}\sum_{i=1}^{N}\abs{h_i}^2\abs{g_i}^2\,.
	\end{equation*}
	The absolute value $\abs{H}$ is, therefore, approximated by a Rayleigh distribution, i.e., $\abs{H}\sim\rayleigh(\sigma)$.
	The ergodic capacity from \eqref{eq:def-erg-capac} for this approximation is then
	\begin{align*}
		C_{\text{erg}}
		&= \expect[\abs{H}]{\log_2\left(1 + \abs*{H}^2\right)}\\
		&= \int_{0}^{\infty} \log_2\left(1+s^2\right) f_{s}(s)\diff{s}\\
		&= \frac{-1}{\log(2)}\exp\left(\frac{1}{2\sigma^2}\right)\Ei\left(-\frac{1}{2\sigma^2}\right)\,.
	\end{align*}
	From the definition of the outage probability in \eqref{eq:def-outage-prob-varying}, it follows that
	\begin{align*}
		\varepsilon_{\text{NLOS}}
		&= \Pr\left(\frac{-1}{\log(2)}\exp\left(\frac{1}{2\sigma^2}\right)\Ei\left(-\frac{1}{2\sigma^2}\right) < R\right)\\
		&= \Pr\left(\frac{1}{2\sigma^2} > \inv{\E}\left(R \log(2)\right)\right)\\
		&= F_{\sigma^2}\left(\frac{1}{2\inv{\E}\left(R \log(2)\right)}\right)\,,
	\end{align*}
	where we introduce the function $\E:\reals_+\to\reals_+$ with $\E(x)=-\exp(x)\Ei(-x)$\footnote{{The function $-\Ei(-x)$ is also known as $E_1(x)$~\cite[Chap.~5]{Abramowitz1972}. From this, it is straightforward to show that $\E(x)$ is strictly monotonic decreasing by verifying that its derivative is negative. Hence, $\E$ has a unique inverse $\inv{\E}$.}}.
}
\section{Proof of Theorem~\ref{thm:outage-prob-varying-los}}\label{app:proof-thm-outage-prob-los}

Similarly to {Lemma~\ref{lem:erg-cap-varying-nlos-n-tilde}}, we will use an approximation of the channel coefficient $H$ for large ${\tilde{N}}$ in the following.
{Therefore, we will first assume a fixed number of active connections $\tilde{N}$.}
{For uniformly distributed $\theta_i$ and $\tvp_i$}, we obtain the approximation by the central limit theorem that for large ${\tilde{N}}$~\cite[Chap.~3.4]{Jammalamadaka2001}
\begin{equation}\label{eq:approx-sum-exp-static}
	\sum_{i=1}^{{\tilde{N}}}\exp\left(\imag\left(\tvp_i{+\theta_i}\right)\right) = {\tilde{N}} \bar{C} + \imag {\tilde{N}} \bar{I}
\end{equation}
with independent $\bar{C}$ and $\bar{I}$, which are normally distributed according to $\bar{C}, \bar{I}\sim\normaldist(0, \frac{1}{2{\tilde{N}}})$.
This yields the following approximation for $H$
\begin{align}
	H &= \left(a\cos\pLOS + {\tilde{N}}\bar{C}\right) + \imag\left(a\sin\pLOS + {\tilde{N}}\bar{I}\right)\\
	&= {\tilde{N}}\left(\hat{C} + \imag\hat{I}\right)\,,
\end{align}
with
\begin{equation*}
	\hat{C}\sim\normaldist\left(\frac{a\cos\varphi_{\text{LOS}}}{{\tilde{N}}}, \frac{1}{2{\tilde{N}}}\right)
	\!\enspace
	\text{and}
	\enspace
	\hat{I}\sim\normaldist\left(\frac{a\sin\varphi_{\text{LOS}}}{{\tilde{N}}}, \frac{1}{2{\tilde{N}}}\right).
\end{equation*}
With this, we can derive that
\begin{equation}
	Z = \frac{2 \abs{H}^2}{{\tilde{N}}} = 2{\tilde{N}}\left(\hat{C}^2 + \hat{I}^2\right)
\end{equation}
is distributed according to a noncentral $\chi^2$ distribution with \num{2} degrees of freedom and noncentrality parameter $\frac{2a^2}{{\tilde{N}}}$, i.e., $Z\sim\chi^2\left(2, \frac{2a^2}{{\tilde{N}}}\right)$~\cite[Chap.~12]{Forbes2010}.

From this, we can derive the \gls{pdf} of $\abs{H}^2$ as
\begin{align}
	f_{\abs{H}^2}(s)
	&= \frac{1}{{\tilde{N}}}\exp\left(\frac{-(a^2+s)}{{\tilde{N}}}\right) I_0\left(\frac{2a}{{\tilde{N}}}\sqrt{s}\right)\label{eq:pdf-gain-varying-los}\,,
\end{align}
where $I_0$ denotes the modified Bessel function of the first kind and order zero~\cite[Eq.~9.6.16]{Abramowitz1972}.
The ergodic capacity is then calculated as
\begin{align*}
	&C_{\text{erg,LOS}}{\left(\tilde{N}\right)}
	= \expect{\log_2\left(1 + \abs{H}^2\right)}\\
	&\quad= \int_{0}^{\infty}\frac{1}{{\tilde{N}}}\log_2(1+s)\exp\left(\frac{-(a^2+s)}{{\tilde{N}}}\right) I_0\left(\frac{2a}{{\tilde{N}}}\sqrt{s}\right)\diff{s}%
	\,.
\end{align*}
It is clear to see that this ergodic capacity does not depend on the realizations of $\hLOS$, $\vec{h}$, and $\vec{g}$.
Thus, the outage probability according to \eqref{eq:def-outage-prob-varying} is a step function
\begin{equation}\label{eq:proof-outage-prob-los}
	\varepsilon_{\text{LOS}}{\left(\tilde{N}\right)} =
	\step\left(R - C_{\text{erg,LOS}}{\left(\tilde{N}\right)}\right)\,.
\end{equation}
{%
Note that this is the outage probability for a fixed number of active link $\tilde{N}$.
In the next step, we need to take the probability into account that $\tilde{N}$ out of $N$ links are active.

Hence, the overall outage probability is given as
\begin{align*}
	\varepsilon_{\text{var,LOS}}
	&= \sum_{i=0}^{N} \varepsilon_{\text{LOS}}\left(i\right) \Pr\left(\tilde{N}=i\right)\\
	&= \sum_{i=0}^{N} \step\left(R - C_{\text{erg,LOS}}\left(i\right)\right) \binom{N}{i}p^{i}(1-p)^{N-i}
\end{align*}
where the last line follows from \eqref{eq:proof-outage-prob-los} and the fact that $\tilde{N}$ is distributed according to a Binomial distribution with $N$ independent trials of probability $p$.

Finally, we analyze the special case of $\tilde{N}=0$.
In this case, we simply have $H=\hLOS=a\exp(\imag\varphi_{\text{LOS}})$ and, thus, $\abs{H}^2=a^2$.
The ergodic capacity is then $C_{\text{erg,LOS}}(0)=\log_2(1+a^2)$, which concludes the proof.
}
\section{{Proof of Theorem~\ref{thm:outage-prob-quant}}}
\label{app:proof-quant-phases}
First, we will only take a closer look at the \gls{nlos} part of $H$, i.e.,
\begin{equation*}
	\sum_{i=1}^{N} \exp\left(\imag\left(\tilde{\varphi_i} + \theta_i\right)\right)\,.
\end{equation*}
We again assume that the phases $\theta_i$ are \gls{iid} and uniformly distributed over $\Q$ and randomly changing with each transmitted symbol.
The phases $\tvp_i$ on the other hand are (continuously) uniformly distributed over $[0, 2\pi]$.
The above expression can be equivalently expressed as
\begin{equation}
	\sum_{i=1}^{N} \exp\left(\imag\left(\tilde{\varphi_i} + \theta_i\right)\right) = \sum_{i=1}^{N}\cos\left(\tilde{\varphi_i} + \theta_i\right) + \imag \sin\left(\tilde{\varphi_i} + \theta_i\right)\,.
\end{equation}
Due to the uniform quantization of the phases in $\Q$, we obtain
\begin{equation}
	\expect[{\theta_i\sim\unif(\Q)}]{\cos\left(\tilde{\varphi_i} + \theta_i\right)} = \expect[{\theta_i\sim\unif(\Q)}]{\sin\left(\tilde{\varphi_i} + \theta_i\right)} = 0
\end{equation}
and
\begin{equation}
	\var\left[{\cos\left(\tilde{\varphi_i} + \theta_i\right)}\right] = \var\left[{\sin\left(\tilde{\varphi_i} + \theta_i\right)}\right] = 0.5\,.
\end{equation}
Note that it is crucial for the above to have the uniform quantization of the unit circle that we assumed for $\Q$ in \eqref{eq:quant-phases-set}.

Due to the independence, we can apply the central limit theorem for the sums, which yields for large $N$
\begin{equation}\label{eq:sum-cos-norm-approx}
	\sum_{i=1}^{N} \cos\left(\tilde{\varphi_i} + \theta_i\right) \stackrel{N\to\infty}{\sim} \normaldist\left(0, \frac{N}{2}\right)\,.
\end{equation}
The same holds for the sum $\sum_{i=1}^{N} \sin\left(\tilde{\varphi_i} + \theta_i\right)$.
Note that this can be applied for any symmetric quantization of the phases, regardless of $K$.

A numerical validation of this observation is shown in Fig.~\ref{fig:hist-sum-cos}.
Histograms of $\sum_{i=1}^{N} \cos\left(\tilde{\varphi_i} + \theta_i\right)$ are presented for $N=4$ in Fig.~\ref{fig:hist-sum-cos-n-small} and $N=50$ in Fig.~\ref{fig:hist-sum-cos-n-large}.
The number of phase quantization levels is set to $K=4$.
The values are obtained by \gls{mc} simulations with $10^6$ samples~\cite{BesserGitlab}.
For comparison, the \gls{pdf} of the approximate normal distribution from \eqref{eq:sum-cos-norm-approx} is shown.
It is clear to see that the approximation is accurate for $N=50$.
On the other hand, $N=4$ is too small to use the normal distribution from \eqref{eq:sum-cos-norm-approx} as an accurate approximation.
\begin{figure}
	\centering
	\subfloat[{$N=4$}]{\begin{tikzpicture}%
	\begin{axis}[
		width=.47\linewidth,
		height=.23\textheight,
		xlabel={Sum of Cosine Terms},
		ylabel={PDF},
		area style,
		scaled y ticks=false,
		ylabel near ticks,
		xmin=-4,
		xmax=4,
		ymin=0,
		]
		\addplot[plot0,fill=plot0!70!white, ybar interval,mark=no] table[x=bins,y=hist] {data/hist-sum-cos-N4-K4.dat};
		\addplot[plot2, mark=no, very thick] table[x=bins,y=norm] {data/hist-sum-cos-N4-K4.dat};
	\end{axis}
\end{tikzpicture}\label{fig:hist-sum-cos-n-small}}
	\hfill
	\subfloat[{$N=50$}]{\begin{tikzpicture}%
	\begin{axis}[
		width=.47\linewidth,
		height=.23\textheight,
		xlabel={Sum of Cosine Terms},
		ylabel={PDF},
		area style,
		scaled y ticks=false,
		ylabel near ticks,
		xmin=-20,
		xmax=20,
		ymin=0,
		yticklabel style={
			/pgf/number format/precision=3,
			/pgf/number format/fixed
		},
		]
		\addplot[plot0,fill=plot0!70!white,ybar interval,mark=no] table[x=bins,y=hist] {data/hist-sum-cos-N50-K4.dat};
		\addplot[plot2, mark=no, very thick] table[x=bins,y=norm] {data/hist-sum-cos-N50-K4.dat};
	\end{axis}
\end{tikzpicture}\label{fig:hist-sum-cos-n-large}}
	\caption{Histogram of $\sum_{i=1}^{N} \cos\left(\tilde{\varphi_i} + \theta_i\right)$ for two values of $N$ and $K=4$. The solid line indicates the \gls{pdf} of the Gaussian approximation from \eqref{eq:sum-cos-norm-approx}.}
	\label{fig:hist-sum-cos}
\end{figure}
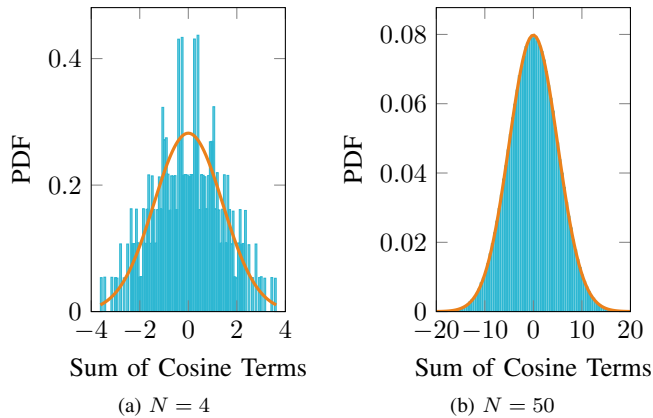

Now that we have established that both real and imaginary part of
\begin{equation*}
	\sum_{i=1}^{N} \exp\left(\imag\left(\tilde{\varphi_i} + \theta_i\right)\right)
\end{equation*}
tend to normal distributions for large $N$, we can apply the results from {Appendix~\ref{app:proof-thm-outage-prob-los}} for the \gls{los} case.
This means that
\begin{equation*}
	\abs{H}^2 = \abs*{\hLOS + \sum_{i=1}^{N} \exp\left(\imag\left(\tilde{\varphi_i} + \theta_i\right)\right)}^2
\end{equation*}
is approximately distributed according to {the \gls{pdf} from \eqref{eq:pdf-gain-varying-los}} for sufficiently large $N$. %
The ergodic capacity is, therefore, also equal to the one of the non-quantized case from~\eqref{eq:erg-capac-random-los-integral}.
It directly follows that the same holds for the \gls{nlos} case.

\printbibliography

\end{document}